\documentclass[11pt]{article}
\usepackage[a4paper,margin=0.8in]{geometry}
\usepackage[T1]{fontenc}
\usepackage{lmodern}
\usepackage[utf8]{inputenc}
\usepackage{graphicx, xcolor, colortbl}
\usepackage{tabularx}
\usepackage{multirow}
\usepackage{tikz}
\usetikzlibrary{arrows,shapes}
\usetikzlibrary{arrows.meta}
\usepackage{dsfont}
\usepackage{amsmath, amsthm, amssymb, mathtools, nicefrac}
\usepackage{enumitem}
\usepackage{float}
\usepackage{upgreek}
\usepackage{hyperref}
\usepackage{parskip}
\usepackage{subcaption}
\usepackage{stmaryrd}
\usepackage{tcolorbox}
\usepackage{booktabs}
\usepackage{multicol}

\definecolor{grammarGreen}{RGB}{0,120,100}

\newcommand{\gG}[1]{\textcolor{grammarGreen}{\texttt{#1}}}

\providecommand{\keywords}[1]
{
  \small	
  \textbf{\textit{Keywords}} #1
}

\newtheorem{thm}{Theorem}

\theoremstyle{definition}
\newtheorem{definition}{Definition}

\newtheorem{example}{Example}

\font\srs=cmr6
\font\ssi=cmti8
\font\sst=cmtt8

\title{A-COMPASS: Formal Foundations for Anonymity Analysis in Microdata}
\date{}

\author{\frenchspacing
\bf Tamara Tagliavia$^{1}$, Silvia Ghilezan$^{1,2}$\\\\
$^{\srs 1}${\ssi Mathematical Institute of the Serbian Academy of Sciences and Arts, Belgrade, Serbia} \\
$^{\srs 2}${\ssi Faculty of Technical Sciences, University of Novi Sad, Novi Sad, Serbia} \\
\\
{\ssi {\ssi E-mail}:}
$^{\srs 1}${\sst tamara.stefanovic@mi.sanu.ac.rs},\ $^{\srs 2}${\sst gsilvia@uns.ac.rs}
}

\begin{document}

\maketitle
\begin{abstract}
In the information age, one of the leading problems is how to ensure individual's privacy. Depending on the context in which privacy is considered, various data privacy models have emerged. However, the domain of formal verification of these models is still not sufficiently explored even when it comes to the most basic models. An attempt to verify privacy requirements is the Compliance Assertion Language (COMPASS). In COMPASS, one can specify an anonymity condition that a table needs to satisfy, and an action that will modify the table if the condition is not satisfied. It is designed to operate on preprocessed tables in a form one record - one group of people. In this paper, we modify the COMPASS language in order to operate on microdata tables in their usual form of one record - one person. The modified language is called A-COMPASS. Along with checking of previously applied anonymity conditions, A-COMPASS enables the execution of anonymization actions as a new feature. We further provide the syntax and the semantics for the A-COMPASS language. We also prove the most important properties of the introduced semantics like determinism and compositionality. Finally, we provide a mechanism to verify anonymity properties, such as k-anonymity and l-diversity. 
\end{abstract}

\keywords{A-COMPASS language, denotational semantics, privacy, anonymity,$k$-anonymity, $l$-diversity }

\section{Introduction}
\label{intro}
On a daily basis, a huge volume of data is collected from a variety of devices and platforms. Such data captures human behaviors, routines, activities, etc. Although the ability to manipulate large amount of data has improved our lives in many domains, it has also evolved new problems. The most important among them is certainly the \textbf{data privacy} problem.

The first definition of privacy was given by Warren and Brandeis in 1890. in their “The Right to Privacy” \cite{warren1890right}, long before the information age. They define privacy as the right to be left alone. Nowadays, privacy is considered a fundamental right. Countries around the world have realized the need to protect the privacy of their citizens. The most important data protection legislation enacted to date is the \textbf{General Data Protection Regulation} (GDPR) \cite{GDPR}. The GDPR is a law originating from the EU, yet it applies to businesses all over the world that collect and process personal information of EU citizens. Such businesses are required to publish and abide a privacy policy agreement. A privacy policy is a statement that details how personal data provided by users will be collected, stored, processed, and shared with third parties. They are usually written in natural language, and thus their formal verification is very demanding.

In parallel with the development of technology and the legal aspect of privacy, technologies for privacy protection also began to develop. In the core of these technologies are mathematical models and formal methods. One of the first models presented for data privacy is \textbf{k-anonymity} \cite{Samarati98, Sweene02}, which deals with a special branch of privacy, called anonymity. It provides certain conditions that a \textbf{microdata} table has to fulfill in order to mitigate the risk of identity disclosure. A microdata record is considered to be a record that is associated with information of one individual. K-anonymity requires that one person in a microdata set cannot be distinguished from at least k-1 other persons with similar characteristics. The formal definition of k-anonymity is revisited in Section~\ref{anonymization}. An improvement of this model is called \textbf{l-diversity} \cite{MKGV07}, and it deals with attribute disclosure, along with identity disclosure. 

When it comes to data privacy models, \textbf{differential privacy} \cite{Dwork2006}, must be mentioned. Differential privacy deals mainly with privacy in aggregated statistics by adding noise to the query result. Therefore, it is a property of a mechanism (that adds noise), not a property of a table like the previous models. 

The research on mathematical models for privacy is increasingly developing, and there is an urge for \textbf{formal verification} of these models. Existing research on verification, such as \cite{hoare2015, dfuzz2013, fuzz2010}, mainly deals with differential privacy, including sensitivity analysis over relational algebra \cite{CatMar2012}. In turn, basic models from the domain of anonymity still seek for more detailed formalization - a gap that formal methods in security suggest to be closed \cite{BUGLIESI2017}. The basic anonymity models are widely used, and in order to be sure that certain privacy laws are adequately applied in practice, we need formal verification of these models.

One of the pioneers in the verification of k-anonymity and l-diversity is a formal language called the \textbf{Compliance Assertion Language} (COMPASS) introduced in \cite{GOKI2023}. COMPASS is an SQL-based language designed to formulate anonymity requirements. Each COMPASS requirement is defined as an assertion followed by an action. An assertion represents the anonymity condition that the table has to satisfy, whereas the action represents the modification that has to be performed on the table if the assertion is not satisfied. The paper focuses on a special kind of tables that is preprocessed and in which each record represents a group of people. In a usual microdata table, each record represents one person. This limitation can be easily overcome by a slight extension of the COMPASS syntax. In this paper, we modify the syntax of COMPASS and we introduce it's semantics.

\paragraph{Contributions} The main contributions of the paper are:

\begin{enumerate}

    \item We extend the COMPASS language into a new language called A-COMPASS (Anonymity Compliance Assertion Language) which can operate with tables in the form of one record - one person in addition to tables in the form one record - one group of people. This modification also allows the performance of anonymization actions(Section~\ref{A-COMPASS}).
    
    \item We modify the syntax of the COMPASS language by adding a new aggregation operation COUNT DISTINCT, removing the action JOIN, and extending the action ZERO into REPLACE(Section~\ref{A-COMPASS}). 
    
    \item We provide a semantics for the A-COMPASS language(Section~\ref{semantics}).
    
    \item We prove fundamental properties of the introduced semantics, such as determinism and compositionality. We also validate anonymity properties of A-COMPASS using the introduced semantics(Section~\ref{properties}).
\end{enumerate}

\section{Anonymization in Microdata}\label{anonymization}
A record that contains information related to a specific individual (a citizen
or a company) is called \textbf{microdata}. The representation of microdata is usually in the form of a table, where each row represents a different individual, while each column contains information about the collected attributes. Therefore, a microdata table can be considered a special case of an SQL table. Table~\ref{table: Non-anonymized microdata table} is an example of a microdata table related to the annual electricity consumption. 

This kind of data is widely collected from surveys, administrative systems, or transactional processes and then forwarded to experts for analysis or published in the form of public statistics. There is a wide spectrum of applications of microdata, from medicine, economics, and social sciences to machine learning. However, microdata releases are very challenging for individual privacy \cite{anonymitybook}.

\begin{table}[h!]
\centering
    \begin{minipage}{0.45\linewidth}
    \centering
\caption{Non-anonymized microdata table}
\label{table: Non-anonymized microdata table}
\begin{tabular}{cccc}
\toprule
Record ID & Age & Postal Code & AEC (kWh) \\
\midrule
1  & 54  & 21201 & 2200 \\
2  & 54  & 21201 & 7400 \\
3  & 54  & 21203 & 8600 \\
4  & 82  & 21410 & 10500 \\
5  & 86  & 21410 & 3500 \\
6  & 83  & 21410 & 8600 \\
7  & 36  & 21101 & 4800 \\
8  & 36 & 21102 & 4800 \\
9  & 45 & 21101 & 6200 \\
10 & 45  & 21102 & 5400 \\
\bottomrule
\end{tabular}
\end{minipage}
\hfill
\begin{minipage}{0.45\linewidth}
    \centering
    \caption{Suppression method}
\label{table: suppression method}
    \begin{tabular}{cccc}
\toprule
Record ID & Age & Postal Code & AEC (kWh) \\
\midrule
1  & 54  & 21201 & 2200 \\
2  & 54  & 21201 & 7400 \\
3  & 54  & 21203 & 8600 \\
4  & \textbf{80}  & 21410 & 10500 \\
5  & \textbf{80} & 21410 & 3500 \\
6  & \textbf{80}  & 21410 & 8600 \\
7  & 36  & 21101 & 4800 \\
8  & 36 & 21102 & 4800 \\
9  & 45 & 21101 & 6200 \\
10 & 45  & 21102 & 5400 \\
\bottomrule
\end{tabular}

\end{minipage}
    
\end{table}

\textbf{Attribute Types.} Attributes in microdata are classified into four categories: identifiers, quasi-identifiers, sensitive attributes, and non-sensitive attributes. An identifier uniquely identifies a person (e.g., social security number, a passport number). A quasi-identifier cannot uniquely identify a person; however, the combination of quasi-identifiers can lead to identification (e.g., name, age, profession, etc.). Sensitive attributes carry  sensitive information about the individuals in the dataset (e.g., sex orientation, health condition, etc.). Non-sensitive attributes are those that don't belong to any of the previous categories.\\

Table~\ref{table: Non-anonymized microdata table} does not contain an identifier, since identifiers are the first to be removed when it comes to microdata publishing. Attributes Age and Postal Code are quasi-identifiers, whereas Annual Electricity Consumption (AEC) is the sensitive attribute.\\

\textbf{Disclosure Risk.} When it comes to anonymity and privacy issues in microdata releases, there are two associated disclosure risks: risk of an identity disclosure and risk of an attribute disclosure \cite{Hundepool2012}. Identity disclosure occurs when an adversary links a record in the data with a specific individual. Then the values of the published attributes are also linked with that individual. Attribute disclosure occurs when an adversary can infer a value of a sensitive attribute for a specific individual with high probability. It can occur even without an identity disclosure. 

For example, if adversaries know that in the place with postal code 21410 lives only one person who is 86 years old (postal code 21410 represents a rural area), they can infer that the record with Record ID = 5 corresponds to that person. This is an example of an identity disclosure. Further, by looking at the value for AEC for this record, the adversary can conclude that this person lives alone (since 3500 kWh consumption per year does not represent high consumption). 

Moreover, if adversaries want to infer the AEC of a person who is 36 years old and lives in the town (the exact postal code is unknown, yet it is known that it is either 21101 or 21102), they cannot conclude which record belongs to that person (two records with Record ID = 7 and Record ID = 8 match). Thus, identity disclosure is not possible. However, since the value of AEC is the same for both records, they can learn its value with probability 1. \\

\textbf{Anonymization Methods.} In order to mitigate disclosure risks, two base anonymization methods were introduced: suppression and generalization. Suppression is a method where certain values of the attributes are usually replaced by an asterisk, e.g., '$\star$', \cite{BayardoA05, Samarati98}. Replacing the attribute value with some other arbitrary value can also be considered a suppression method. Generalization is a method where all values of the attribute are replaced by a more general value \cite{Aggarwal04, LeFevreDR05}. In this way, it can be considered that the values of that attribute are divided into categories.

\begin{example}[Suppression] Suppression is often used to cover rare values, since they represent a high risk for re-identification, e.g., ages over 80. When it comes to Table~\ref{table: Non-anonymized microdata table}, if researchers conclude that for the data analysis an age category (not exact age) is enough, and if they want to protect the privacy of older individuals, they can decide to replace all values of an attribute Age that are over 80 with the value 80. Table~\ref{table: suppression method} is an anonymized version of Table~\ref{table: Non-anonymized microdata table} obtained by the suppression method.

\label{example: suppression}
\end{example}

\begin{example}[Generalization] For  statistical analysis, often knowing the area is sufficient, meaning that it is not necessary to know the exact town. That is why the values of the attribute Postal Code can be generalized by removing the last two digits. The new values of the attribute Postal Code now represent specific areas. Table~\ref{table: generalization method} is an anonymized version of Table~\ref{table: Non-anonymized microdata table} obtained by the generalization method. 

\begin{table}[h!]
\centering
    \begin{minipage}{0.45\linewidth}
    \centering
\caption{Generalization method}
\label{table: generalization method}
\begin{tabular}{cccc}
\toprule
Record ID & Age & Postal Code & AEC (kWh) \\
\midrule
1  & 54  & \textbf{212**} & 2200 \\
2  & 54  & \textbf{212**} & 7400 \\
3  & 54  & \textbf{212**} & 8600 \\
4  & 82  & \textbf{214**} & 10500 \\
5  & 86  & \textbf{214**} & 3500 \\
6  & 83  & \textbf{214**} & 8600 \\
7  & 36  & \textbf{211**} & 4800 \\
8  & 36 & \textbf{211**} & 4800 \\
9  & 45 & \textbf{211**} & 6200 \\
10 & 45  & \textbf{211**} & 5400 \\
\bottomrule
\end{tabular}

\end{minipage}
\hfill
\begin{minipage}{0.45\linewidth}
    \centering
    \caption{2-anonymity and 1-diversity}
\label{table: k-anonymity and l-diversity}
   \begin{tabular}{cccc}
\toprule
Record ID & Age & Postal Code & AEC (kWh) \\
\midrule
1  & 54  & 212** & 2200 \\
2  & 54  & 212** & 7400 \\
3  & 54  & 212** & 8600 \\
\hline
4  & 80  & 214** & 10500 \\
5  & 80  & 214** & 3500 \\
6  & 80  & 214** & 8600 \\
\hline
7  & 36  & 211** & 4800 \\
8  & 36 & 211** & 4800 \\
\hline
9  & 45 & 211** & 6200 \\
10 & 45  & 211** & 5400 \\
\bottomrule
\end{tabular} 

\end{minipage}
    
\end{table}

\label{example: generalization}
\end{example}

The first data privacy models are based on these two anonymization methods. One of them is called \textbf{k-anonymity} and it was introduced by Samarati and Sweeney in \cite{Samarati98, Sweene02}. The second one, called \textbf{l-diversity}, is essentially the extension of k-anonymity. It was introduced by Machanavajjhala et al. in \cite{MKGV07} with the aim of improving the weaknesses of the previous model. Below we will recall the definitions of both models.

\begin{definition}[k-anonymity] Let $R$ be a table with a set of attributes $A$. Let $Q_R\subseteq A$ be a set of quasi-identifiers for  table $R$. The table $R$ satisfies the k-anonymity principle if and only if  each combination of values of the attributes in $Q_R$ is shared by k or more records. 
    
\end{definition}

The k-anonymity principle says that an individual cannot be distinguished from at least k-1 other individuals that share the same combination of values of quasi-identifiers. The records of these individuals form an equivalence class. In this way k-anonymity provides protection against identity disclosure, up to some level k.

However, k-anonymity is vulnerable to the \textit{homogeneity attack} and the \textit{background knowledge attack}. Both attacks refer to attribute disclosure. The homogeneity attack can be performed when all the values of the sensitive attribute in an equivalence class are the same. If an adversary concludes that someone's data is in that equivalence class, they can exactly predict the sensitive value for that person. The background knowledge attack can be performed when the attacker uses some side (background) knowledge to reduce the set of possible values for the sensitive attribute. In order to overcome these attacks, the l-diversity model was developed \cite{MKGV07}.

\begin{definition}[l-diversity] An equivalence class of a table $R$ satisfies the l-diversity principle if and only if it contains at least l well-represented values for the sensitive attribute. The table $R$ satisfies the l-diversity principle if and only if each equivalence class of $R$ satisfies the l-diversity principle.
    
\end{definition}

\begin{example}[k-anonymity and l-diversity] By applying both suppression and generalization methods from Example~\ref{example: suppression} and Example~\ref{example: generalization}, Table~\ref{table: k-anonymity and l-diversity} is obtained. This table has 4 equivalence classes. It satisfies the k-anonymity principle for k=2, since the smallest equivalence classes have 2 records. It satisfies the l-diversity principle for l=1, since the values for the sensitive attribute in the third equivalence class are all the same. This class is also vulnerable to the homogeneity attack.

\end{example}

\section{A-COMPASS Language}\label{A-COMPASS}
In order to verify whether a certain dataset satisfies anonymity requirements, the Compliance Assertion Language (COMPASS) was introduced in \cite{GOKI2023}. In COMPASS one can formulate the anonymity conditions that the dataset needs to satisfy and, if a condition is not satisfied, actions that need to be performed to modify the dataset to meet the defined anonymity criteria. We modify the COMPASS language in order to improve the anonymity analysis. Our goal is to extend the COMPASS language twofold:

\begin{itemize}
    \item to allow users to work with standard microdata tables in the form of one record - one person;
    \item to allow users to anonymize a table by suppression and generalization.
\end{itemize}

\subsection{A-COMPASS Syntax}

The syntax of the A-COMPASS language is shown in Figure~\ref{figure: A-COMPASS syntax}. We have introduced the commands in red. The rest of the commands are inherited from COMPASS language. \\


In the following examples, we explain the A-COMPASS commands and how they are used to modify Table~\ref{table: Non-anonymized microdata table} and to perform anonymity analysis.

\begin{example}[Suppression in A-COMPASS]\label{example: suppression} In order to obtain Table~\ref{table: suppression method} from Table~\ref{table: Non-anonymized microdata table}, the following A-COMPASS requirement needs to be applied.

\begin{tcolorbox}[colback=gray!10,colframe=gray!10, boxrule=0.6pt, arc=0pt]

\begin{tabular}{rl}

   \gG{EACH} & \gG{RESULT}\\
   : & Age $\leq$ 80\\
   : & \gG{REPLACE} Age \gG{WITH} \.80;
   \end{tabular}

\end{tcolorbox}

The assertion \gG{EACH RESULT}: Age $\leq$ 80 represents the condition we want our table to satisfy. The command EACH RESULT means that the assertion affects all records, Age $\leq$ 80 is the Boolean condition that has to be satisfied. If the assertion is not satisfied, action REPLACE Age WITH 80 replaces the values of the attribute Age with 80 in each record where Age > 80.

NOTE: If the command SOME RESULT is used instead of EACH RESULT, only a subset of records is required to satisfy the defined condition, not necessarily all records. If the assertion is not satisfied in this case, action REPLACE Age WITH 80 will affect all records. 
    
\end{example}

\begin{minipage}{\linewidth}

\hrule
\vspace{0.5em}

\begin{tabular}{lcl}

\gG{requirements} & ::= & \{ \gG{assertion} ':' \gG{action} ';' \} \\[6pt]

\gG{assertion} & ::= & ( 'EACH' \gG{result} ':' \gG{boolExpr} ) \\
&  & \textbar( 'SOME' \gG{result} ':' \gG{boolExpr} ) \\[6pt]

\gG{result} & ::= & 'RESULT' \textbar \gG{resultQuery} \\

\gG{resultQuery} & ::= & \gG{processQuery} \textbar \gG{filterQuery} \\

\gG{processQuery} & ::= & 'PROCESS' \gG{generalization} \\
& & [ 'WHERE' \gG{boolExpr} ] \\
& & [ 'GROUP\_BY' \gG{attribute} \{ ',' \gG{attribute} \} ] \\[6pt]

\gG{generalization} & ::= & ( ( 'SUM' \textbar 'MIN' \textbar 'MAX' ) '(' \gG{attribute} ')' ) \\
& & \textbar( 'COUNT' '(' '*' ')' ) \textbar {\color{red}( 'COUNT DISTINCT' '(' 'attribute') )} 'AS' \gG{attribute} \\[6pt]

\gG{filterQuery} & ::= & 'FILTER' \gG{boolExpr} \\[8pt]

\gG{boolExpr} & ::= & \gG{boolProduct} \{ 'OR' \gG{boolProduct} \} \\

\gG{boolProduct} & ::= & \gG{boolLiteral} \{ 'AND' \gG{boolLiteral} \} \\

\gG{boolLiteral} & ::= & [ 'NOT' ] \gG{boolCond} \\
& & \textbar'(' \gG{boolExpr} ')' \\

\gG{boolCond} & ::= & \gG{attribute} ( '$<$' \textbar '$>$' \textbar '$=$' \textbar '$\leq$' \textbar '$\geq$' )( number \textbar string ) \\[10pt]

\gG{action} & ::= & 'REJECT' \\
& & \textbar{\color{red}( 'REPLACE' attribute 'WITH' value )} \\
& & \textbar( 'RANDOM' \gG{attribute} \gG{lowerBound} \gG{upperBound} ) \\[8 pt]

\gG{attribute} & ::= & string \\
\gG{value} & ::= & string \\
\gG{lowerBound} & ::= & number \\
\gG{upperBound} & ::= & number \\

\end{tabular}

\vspace{0.5em}
\hrule
\vspace{0.5em}

\captionof{figure}{The Syntax of the A-COMPASS language.}
\label{figure: A-COMPASS syntax}

\end{minipage}

\begin{example}[Generalization in A-COMPASS]\label{example: generalization} In order to obtain Table~\ref{table: generalization method} from Table~\ref{table: Non-anonymized microdata table}, the composition of the three following A-COMPASS requirements need to be applied.

\begin{tcolorbox}[colback=gray!10,colframe=gray!10, boxrule=0.6pt, arc=0pt]
  \begin{tabular}{rl}
  \gG{EACH} & \gG{FILTER} Postal code > '21200' \gG{AND} Postal code < '21299' \\
        :  & Postal code = '212**' \\
        :  & \gG{REPLACE} Postal code \gG{WITH} \. '212**';
\end{tabular}
\end{tcolorbox}

\begin{tcolorbox}[colback=gray!10,colframe=gray!10, boxrule=0.6pt, arc=0pt]
\begin{tabular}{rl}
   \gG{EACH}&\gG{FILTER} Postal code > '21400' \gG{AND} Postal code < '21499'\\
         : & Postal code = '214**'\\
         : & \gG{REPLACE} Postal code \gG{WITH} \. '214**';
\end{tabular}
\end{tcolorbox}

\begin{tcolorbox}[colback=gray!10,colframe=gray!10, boxrule=0.6pt, arc=0pt]
\begin{tabular}{rl}
   \gG{EACH}&\gG{FILTER} Postal code > '21100' \gG{AND} Postal code < '21199'\\
         : & Postal code = '211**' \\
         : & \gG{REPLACE} Postal code \gG{WITH} \. '211**';
\end{tabular}
\end{tcolorbox}

The assertions in all three requirements now represent accessing particular records, and the conditions that we want these particular records to satisfy. We explain the syntax on the example of the second requirement.

In the second requirement, Postal code = '214**' represents the condition we want some records from our table to satisfy.  The command EACH FILTER means that the assertion affects only filtered records that satisfy the Boolean expression Postal code > '21400' \gG{AND} Postal code < '21499'. If the assertion is not satisfied, action REPLACE Postal Code WITH '214**' replaces the value of the attribute Postal Code with '214**' only in the records where Postal Code starts with 214.

NOTE: The same result can be achieved by the following A-COMPASS requirement. 

\begin{tcolorbox}[colback=gray!10,colframe=gray!10, boxrule=0.6pt, arc=0pt]
\begin{tabular}{rl}
   \gG{EACH}&\gG{RESULT} \\
         : & \gG{NOT} (Postal code = '21410')\\
         : & \gG{REPLACE} Postal code \gG{WITH} \. '214**' ;
\end{tabular}
\end{tcolorbox}

This approach gives the same result like the previous one since there is only one record in the table for which Postal code = '21410'. However, the first approach is more intuitive because one filters the records that are to be affected and there is no need to handle the negation of a condition.

\end{example}


\begin{example} [OUTLIERS in A-COMPASS] \label{example: outlier} Outliers can be very useful for statistical analysis, yet can also be misleading. Therefore, they require special attention from the perspective of statistical analysis. When it comes to anonymity analysis, they require even greater attention because they often contain highly revealing information. For example, the record with RECORD ID = 1 from Table~\ref{table: Non-anonymized microdata table} is an outlier record. In the case of identity disclosure, an adversary will associate a specific person with the value AEC=2200. The 2200 kWh for annual electricity consumption is too low even for one person living in a house, leading to the conclusion that very likely no one lives in this house. In order to preserve the privacy of the owner, the data curator might consider replacing the value AEC=2200 with some other value. Additionally, if the condition applies to multiple records in the table, the AEC value may be replaced in all of them.

Let us assume that the minimum value corresponding to the consumption of one person is 3000 kWh, and that all values of AEC less than 3000 are too revealing. If we apply the deterministic approach, with the help of the REPLACE action, these values can be replaced, for example, with the mean value for AEC. The following A-COMPASS requirement represents this approach:

\begin{tcolorbox}[colback=gray!10,colframe=gray!10, boxrule=0.6pt, arc=0pt]
\begin{tabular}{rl}
   \gG{EACH}&\gG{RESULT}\\
         : & AEC $\geq$ 3000 \\
         : & \gG{REPLACE} AEC \gG{WITH} \. 6200;
\end{tabular}
\end{tcolorbox}

Replacing the outlier value with the mean value is the standard approach used in statistical analysis. However, it significantly changes the distribution of the attribute. That is why one can consider replacing the outlier value with some other value. For example, the data curator can decide to use the probabilistic approach and replace the values less than 3000 with a random value from the range that corresponds to the range of AEC values for household of one person. This approach does not affect the attribute distribution as much as the previous approach, and it is close to the differential privacy approach.  The following A-COMPASS requirement represents the probabilistic approach:

\begin{tcolorbox}[colback=gray!10,colframe=gray!10, boxrule=0.6pt, arc=0pt]
\begin{tabular}{rl}
   \gG{EACH}&\gG{RESULT}\\
         : & AEC $\geq$ 3000 \\
         : & \gG{RANDOM} AEC 3000 4000;
\end{tabular}
\end{tcolorbox}

The action RANDOM attribute lowerBound upperBound replaces the values of an attribute with an integer value between the lowerBound and the upperBound. It can also be used for categorical attributes, if preprocessing of the table is done in a way to replace the value of a category with an integer value.
    
\end{example}


\begin{example}[k-anonymity in A-COMPASS] \label{example: k-anonymity} The A-COMPASS assertion for Table~\ref{table: k-anonymity and l-diversity} that corresponds to the k-anonymity condition is the following:


\begin{tcolorbox}[colback=gray!10,colframe=gray!10, boxrule=0.6pt, arc=0pt]
\begin{tabular}{rl}
   \gG{EACH PROCESS}&\gG{COUNT}(*) \gG{AS} Class Size \gG{GROUP BY} Age, Postal Code\\
         : & Class Size $\geq$ k;
\end{tabular}
\end{tcolorbox}

This is an example of the A-COMPASS processQuery where COUNT(*) is used for generalization. This query counts the number of records in each group defined by Age and Postal Code. The counts represent the values of a new attribute, Class Size. The Boolean condition Class Size $\geq$ k corresponds to the k-anonymity condition.
    
\end{example}


\begin{example}[l-diversity in A-COMPASS]\label{example: l-diversity} The A-COMPASS assertion for Table~\ref{table: k-anonymity and l-diversity} that corresponds to the l-diversity condition is the following:


\begin{tcolorbox}[colback=gray!10,colframe=gray!10, boxrule=0.6pt, arc=0pt]
\begin{tabular}{rl}
   \gG{EACH PROCESS}&\gG{COUNT DISTINCT} AEC \gG{AS} Diversity AEC \gG{GROUP BY} Age, Postal Code\\
         : & Diversity AEC $\geq$ l;
\end{tabular}
\end{tcolorbox}

This is an example of the A-COMPASS processQuery where COUNT DISTINCT is used for generalization. This query counts the number of different values of the attribute AEC in each group defined by Age and Postal Code. The counts represent the values of a new attribute, Diversity AEC. The Boolean condition Diversity AEC $\geq$ l corresponds to the l-diversity condition.
    
\end{example}


\begin{example}[Homogeneity attack in A-COMPASS] \label{example: homogeneity attack} As we have stated before, Table~\ref{table: k-anonymity and l-diversity} is vulnerable to the homogeneity attack since all the values for the sensitive attribute in the third class are the same. This can be prevented by the action REJECT, which will delete all the records of the vulnerable equivalence class.


\begin{tcolorbox}[colback=gray!10,colframe=gray!10, boxrule=0.6pt, arc=0pt]
\begin{tabular}{rl}
   \gG{EACH PROCESS}&\gG{COUNT DISTINCT} AEC \gG{AS} Diversity AEC \gG{GROUP BY} Age, Postal Code\\
         : & Diversity AEC > 1\\
         : & \gG{REJECT};
\end{tabular}
\end{tcolorbox}
    
\end{example}

Finally, by applying the previous A-COMPASS requirements to Table~\ref{table: Non-anonymized microdata table}, Table~\ref{table: final} is obtained.

\begin{table}[h!]
\centering

\begin{minipage}{0.34\linewidth}
    \centering
    
    \caption{Resulting table after A-COMPASS application}
\label{table: final}
   \begin{tabular}{cccc}
\toprule
Record ID & Age & Postal Code & AEC (kWh) \\
\midrule
1  & 54  & 212** & 3500 \\
2  & 54  & 212** & 7400 \\
3  & 54  & 212** & 8600 \\
\hline
4  & 80  & 214** & 10500 \\
5  & 80  & 214** & 3500 \\
6  & 80  & 214** & 8600 \\
\hline
9  & 45 & 211** & 6200 \\
10 & 45  & 211** & 5400 \\
\bottomrule
\end{tabular} 
\end{minipage}
\end{table}

\subsection{A-COMPASS vs. COMPASS} 

\vspace{5pt}

In the following, we explain the main differences between the COMPASS and the A-COMPASS language. 

\vspace{5pt}

\textbf{One Record - One Group vs. One Record - One Person.} Standard microdata tables are in the form of one record - one person, like Table~\ref{table: Non-anonymized microdata table}. However, COMPASS works only with pre-processed tables in the form of one record - one group. That means that the anonymization methods are already applied and a new attribute called Number is introduced. Number represents the number of persons in each group. COMPASS then allows reasoning about whether this pre-processed table satisfies some anonymity conditions. In addition to the tables in the form of one record - one group, A-COMPASS can also work with tables in the form of one record - one person. \\

\textbf{ZERO vs. REPLACE.} The first modification of the syntax is in the generalization of the action ZERO from the COMPASS language into a new action REPLACE, which gives more flexibility. ZERO replaces the values of an attribute with the value 0, in turn, REPLACE allows the user to replace the values of an attribute with an arbitrary value. This action also enables the suppression of an attribute, e.g., Example~\ref{example: suppression}. In addition, a composition of requirements with REPLACE enables the generalization of an attribute, e.g., Example~\ref{example: generalization}. \\

Furthermore, standard COMPASS language does not allow reasoning about $l$-diversity in the standard one record - one person setting. In order to do this in COMPASS, two steps are needed. First, duplicates must be collapsed, so each sensitive value appears only once per equivalence class. Then it can be counted how many records exist per group. In order to do so, sub-queries are needed, or a composition of two requirements. However, sub-queries are not defined. The composition of two queries is also not possible because there is no action in COMPASS that can collapse duplicates. That is why we introduce COUNT DISTINCT, a new aggregation operation in A-COMPASS.\\

\textbf{COUNT DISTINCT.} The second modification is in the extension of the COMPASS language with a new aggregation operation called COUNT DISTINCT, which counts the distinct values of an attribute. This aggregation operation allows reasoning about the l-diversity condition in the one record - one person setting, like in Example~\ref{example: l-diversity}.\\

\textbf{No JOIN.} The original COMPASS language supports JOIN action which allows users to join groups in one record - one group setting. Basically, the motivation behind the JOIN action is the generalization method. The action sums the values of the attribute Number for joined groups. In the A-COMPASS language, generalization can be performed by the composition of requirements with REPLACE action. That is why A-COMPASS does not contain (and need) the action JOIN.\\

To summarize, from the aspect of syntax, requirements from Example~\ref{example: outlier}  (RANDOM part) and Example~\ref{example: k-anonymity}, are examples of correct syntax in both COMPASS and A-COMPASS language, while requirements from Examples~\ref{example: suppression}, \ref{example: generalization}, \ref{example: outlier} (REPLACE part), \ref{example: l-diversity}, \ref{example: homogeneity attack} are correct only in A-COMPASS. Looking at the broader picture, from the aspect of the table type to which these requirements apply, all presented examples are correct only in A-COMPASS.

\subsection{A-COMPASS vs. SQL}

\vspace{5pt}

Both A-COMPASS and COMPASS are SQL based languages. Consequently, they inherit a large part of SQL features. In the following, we elaborate the main similarities and differences between A-COMPASS and SQL that are relevant for the problem.\\

\textbf{Sets vs. Bags. } The SQL semantics introduced in \cite{GuLi2017}, is defined over bags (multisets). The A-COMPASS semantics follows this principle. Although the first works on formal semantics of SQL were using set semantics \cite{MalechaMSW10, Negri1991}, it turned out that this approach is not expressive enough. Bag semantics is needed for calculations with duplicates. In the original table duplicate records are not allowed (due to the uniqueness of the primary key); however, when it comes to calculations with COUNT and SUM duplicate values can appear. It is the same in A-COMPASS. In fact, duplicates are the core of the anonymity analysis. Each record represents one person, while two identical records (if the primary key is not observed) represent two different persons with the same characteristics. The k-anonymity principle is based on this fact.\\

\textbf{Multiple Tables.} While SQL deals with multiple tables at a time, A-COMPASS is always interpreted on a single input table. That is why in A-COMPASS we do not have to deal with disambiguation of attributes with the same name across tables as in SQL. In A-COMPASS, attributes are elements of a set of attributes, and the only requirement is that there are no attributes with the same name.\\

\textbf{SELECT Queries.} In SQL,  \textsf{SELECT}...\textsf{FROM}...\textsf{WHERE} is the main query type. In A-COMPASS, there are no queries in this form. However, the semantics of requirements with the filterQuery is the same as semantics of the \textsf{FROM}...\textsf{WHERE}, which is a part of the \textsf{SELECT}...\textsf{FROM}... \textsf{WHERE} query restricted to one table.\\

\textbf{Subqueries.} SQL allows subqueries, such as \textsf{SELECT}...\textsf{FROM} ( \textsf{SELECT}...\textsf{FROM}...). They can also produce attributes with the same name. A-COMPASS does not support subqueries.\\

\textbf{Three-valued Logic.} SQL operates with three-valued logic, using three truth values: true (\textbf{t}), false (\textbf{f}) and unknown (\textbf{u}). The unknown is also called \textsf{NULL}. First, SQL supports missing values, and they are treated as \textsf{NULL}s. Second, \textsc{NULL} can also appear in conditions of a query or in queries with aggregate operations, such as SUM, MIN, MAX, COUNT, and COUNT DISTINCT. This happens when there are no records that satisfy the condition. COUNT and COUNT DISTINCT on an empty bag produce 0, while SUM, MIN, and MAX produce \textsf{NULL}. A-COMPASS operates with two-valued logic under certain justified restrictions. A-COMPASS is interpreted only on a complete table, without missing values. Therefore, the table with missing values is supposed to be preprocessed before the analysis in A-COMPASS. This restriction comes naturally since the released tables with microdata are usually complete. There is no point in releasing the incomplete record. When it comes to queries with aggregate operations, they can be defined on non-empty bags to avoid \textsf{NULL} values. Such results are not relevant for the anonymity analysis, so the language does not lose anything on it's expressiveness by using two-valued logic.\\

\textbf{Randomness.} Most SQL engines use a deterministic pseudo-random number generator. Randomness is exposed through functions such as \textsf{random()} or \textsf{RAND()} whose range is $[0,1)$. However, formal semantics of SQL that explicitly model these functions are rarely addressed in the literature. The denotational semantics of programming languages in \cite{BoDAetAl2016, DaSiSmi2023} provide a purely deterministic semantics for probabilistic programs, using deterministic pseudo-random samplers, called deterministic stream or random trace. A-COMPASS semantics follows the same idea. 

\section{Semantics}\label{semantics}
In this section we develop the denotational semantics for the A-COMPASS language. Our approach is inspired by Guagliardo and Libkin \cite{GuLi2017}, who introduce denotational semantics for SQL. First, we explain the data model and the notation we are going to use. Then we present the semantics.

\subsection{Data Model and Notation}


A-COMPASS data model takes into account two sets:
\begin{itemize}
    \item $N$ - a countably infinite set of attribute names (Names), and
    \item $D$ - a non-empty set (without \textsc{NULL}) of values that populate the data-base(Domain) .
\end{itemize}

The non-empty tuple of distinct names from $N$ is called a schema. The domain set $D$ comprises primitive domains, such as integers and strings. We assume that there is a well-defined semantics of predicates of base types. 

A {\bf record} $r$ is a tuple of elements from $D$. A {\bf relation} $R$ is a finite bag (multiset) of records \footnote{Bags will be denoted by curly braces unless explicitly stated otherwise.}. With $S=(A_1, A_2, \cdots, A_n)$ we denote the schema of the relation $R$, and with $r=(a_1, a_2, \cdots, a_n)$ a record on the schema $S$. \\

For the multiplicity of a record $r$ in a bag $R$, we will use the following notation:
\begin{itemize}
    \item $\#(r,R)$ for the multiplicity of $r$ in $R$;
    \item $r\in_k R$ for $\#(r,R)=k$;
    \item $r \in R$ to indicate that $r\in_k R$ for some $k>0$;
        \item $\displaystyle |R|=\sum_{r} \#(r,R)$ for the number of records in $R$ (the cardinality of $R$).
\end{itemize}

We will also use the following definitions of bag operations \cite{GruMi1996, LibWo1997}:

\begin{itemize}
    \item $\#(r, R_1 \cup R_2)=\#(r, R_1)+\#(r, R_2)$;
    \item $\#(r, R_1 \cap R_2)=\min (\#(r, R_1),\#(r, R_2))$;
    \item  $\#(r, R_1 - R_2)= \max (\#(r, R_1)-\#(r, R_2), 0)$.
\end{itemize}

\noindent For turning a bag into a set (which will be needed for COUNT DISTINCT) we are going to use {\bf duplicate elimination operation} $\upvarepsilon$ defined in \cite{GuLi2017} in the following way:

$$\#(r, \upvarepsilon(R))=\min(\#(r,R),1).$$

\noindent For actions REPLACE and RANDOM, we need to define \textbf{record update}.

\begin{definition}[Record Update]\label{definition: record update}
    For a record $r=(a_1, a_2, \cdots, a_n)$ on a schema $l(R)=(A_1, A_2, \cdots, A_n)$, an attribute $A_i\in l(R)$ and a value $a\in D$, the update of the record is defined by $$r[A_i \xrightarrow{} a ] = (a_1, a_2, \cdots, a_{i-1}, a, a_{i+1}, \cdots, a_n).$$
\end{definition}

\noindent For action RANDOM, we also need to define a \textbf{random trace}.

\begin{definition}[Random Trace] \label{definition: random trace}
    A random trace $\rho$ is an infinite sequence $\rho=(\rho_1, \rho_2, \cdots)\in [0,1)^\mathbb{N}$.
\end{definition}

\noindent A fixed random trace in the beginning (at the initial state) allows us to model RANDOM deterministically. For fixed $\rho \in [0,1)$ and integer bounds $l$ (lowerbound) and $u$ (upperbound), \textsf{RANDOM} returns the integer value $$random(l,u,\rho)=l+\lfloor \rho \cdot (u-l+1)\rfloor.$$

\noindent Therefore, the semantic state will be the pair $(R, \rho)$, where $R$ is a relation and $\rho$ is a random trace.\\

\noindent In order to obtain the semantics that is fully deterministic, we also need to define \textbf{canonical ordering} of a bag.\\

\begin{definition} [Canonical Ordering] \label{definition: canonical ordering}

\noindent For a bag $R$, the canonical ordering $order(R)=(r_1, \cdots, r_m)$ is a list of records in non-decreasing order, where each record appears as many times as it's multiplicity in $R$.
    
\end{definition}

\noindent For the non-decreasing order, we assume the natural order for numbers and the lexicographic order for strings.\\

\noindent Why do we need canonical ordering? The action RANDOM takes a record, takes one value from the random trace, and updates the record as described above. For the next record, it takes the next value from the random trace. However, bags do not have an order. When it comes to the semantics, different orderings of the same bag $R$ will lead to different outputs $(R,\rho)$ and the semantics will not be deterministic. This is why, at the beginning, we need to fix a random trace and the canonical ordering of the bag. 

\subsection{Semantics}


Our goal is to design a semantics of the requirements of the A-COMPASS language. The semantics of a requirement $r$ will be denoted by $\llbracket r \rrbracket$. This is a function that takes as input: a relation $R$(a database) and a random trace $\rho$. The output of the function will be the relation and the random trace obtained by executing the requirement $\llbracket r \rrbracket (R,\rho)$. The schema of  $\llbracket r \rrbracket (R,\rho)$ will be denoted by $l(R)$. It represents the tuple of attribute names associated with $\llbracket r \rrbracket (R,\rho)$ and it is related to the first part of the requirement, the assertion. The schema is defined in Figure~\ref{figure: A-COMPASS schema}, depending on the type of the assertion.

\begin{figure}[H]
    \centering
    \begin{minipage}{0.75\textwidth}
    \rule{\linewidth}{0.5pt}
    \begin{tabular}{@{}l@{}}
    $l(R)=S$, \text{tuple of attribute names for the original relation}\\[0.6em]
    $l(\text{RESULT})=l(R)$\\[0.6em]
    $l(\text{FILTER} \: \varphi)=l(R)$\\[0.6em]
    $l(\text{PROCESS} \: f(B) \:\text{as} \: C \:\text{WHERE} \:\psi \:\text{GROUP BY} \:(G_1, \cdots, G_m))=(G_1, \cdots, G_m, C)$\\[0.6em]
    $l(\text{PROCESS} \: f(B) \:\text{as} \: C \:\text{WHERE} \:\psi ) =(C)$\\
    \rule{\linewidth}{0.5pt}
    \end{tabular}
    \end{minipage}
    \caption{The schema of the A-COMPASS requirements}
    \label{figure: A-COMPASS schema}
\end{figure}

Now, we define a function that maps attribute names to values of a record.

\begin{definition}[Environment.] An environment is partial mapping from $N$, attribute names,  to $D$, values. Given a schema $S=(A_1,A_2,\cdots,A_n)$ and a record $r=(a_1,a_2, \cdots,a_n)$ on $S$, the environment is defined by

$$\eta_{S,r}(A_i)=a_i.$$
    
\end{definition}

In the following, we describe the semantics of the A-COMPASS language. The input of the semantic function $\llbracket \cdot \rrbracket$ depends on the syntactic category in the matter: for terms, the only input is the environment; for predicates and assertions, the inputs are the relation and the environment; finally, for actions, the inputs are the relation, environment, random trace and the affected bag \footnote{An affected bag by an assertion is not a syntactic category of the A-COMPASS language. However, it is a product of the assertion needed to perform an action.}. \\

\textbf{Terms.} A term is either a constant $c$ from $D$, or an attribute reference $A$ \footnote{We use the same notation for attribute names and attribute references. As an attribute name, $A$ denotes an attribute (column) name in a schema $S$. As an attribute reference, $A$ denotes the value of the attribute named $A$.}. The semantics of a term(and tuple of terms) is given by the environment $\eta$ in Figure~\ref{figure: A-COMPASS terms}.

\begin{center}

\hrule
\vspace{0.5em}

\begin{minipage}[t]{0.4\linewidth}
\centering 
\textbf{TERMS}
\[\llbracket t \rrbracket_\eta=\begin{cases*}
\eta(A)& \text{if} t=A\\
c& \text{if} t=c
\end{cases*}\]
\[\llbracket (t_1, t_2, \cdots, t_n) \rrbracket_\eta = (\llbracket t_1 \rrbracket_\eta, \llbracket t_2 \rrbracket_\eta, \cdots, \llbracket t_n\rrbracket_\eta)\]
\end{minipage}
\hfill
\begin{minipage}[t]{0.55\linewidth}
\centering 
\textbf{PREDICATES}
\[\llbracket A \, \theta \, c\rrbracket_\eta =
\begin{cases}
    \text{true} & \text{if } \llbracket A \rrbracket_\eta \, \theta \, \llbracket c\rrbracket_\eta = \eta(A) \, \theta \, c \text{ holds} \\
    \text{false} & \text{otherwise}
\end{cases}\]
\[\llbracket NOT \, \varphi \rrbracket_\eta = \neg \llbracket \varphi \rrbracket_\eta\]
\[\llbracket \varphi_1 \: AND \: \varphi_2 \rrbracket_\eta= \llbracket \varphi_1 \rrbracket_\eta \, \land \, \llbracket \varphi_2 \rrbracket_\eta\]
\[\llbracket \varphi_1 \: OR \: \varphi_2 \rrbracket_\eta= \llbracket \varphi_1 \rrbracket_\eta \, \lor \, \llbracket \varphi_2 \rrbracket_\eta\]
\end{minipage}

\vspace{0.5em}
\hrule

\vspace{0.5em}

\captionof{figure}{Semantics of the A-COMPASS terms and predicates}
\label{figure: A-COMPASS terms}
\end{center}

\textbf{Predicates.} Predicates correspond to A-COMPASS Boolean conditions(boolCond), Boolean literals(boolLiteral), Boolean products(boolProduct), and Boolean expressions(boolExpr). Their evaluation is two-valued $\llbracket P \rrbracket_\eta \in \{true, false\}$. We distinguish the following cases:\\

\begin{itemize}

\item \underline{boolCond} Boolean conditions are atomic predicates in a form $A \,\theta \, c$, where $A$ is an attribute reference, $\theta \in \{<,>,=,\leq, \geq\}$ is a comparison operator, $c$ is a constant; \\

\item \underline{boolLiteral} Boolean literal is a negation of a Boolean condition; \\

\item \underline{boolProduct and boolExpr} Boolean product and Boolean expression represent the conjuction and the disjunction of Boolean conditions, respectively. \\

\end{itemize}

The semantics of A-COMPASS predicates is given in Figure~\ref{figure: A-COMPASS terms}.\\

\subsubsection{Assertions.} 

\vspace{5pt}

The syntax of the A-COMPASS assertion is given by:\\

\begin{tabular}{lcl}
\gG{assertion} & ::= & ( 'EACH' \gG{result} ':' \gG{boolExpr} ) \\
&  & \textbar( 'SOME' \gG{result} ':' \gG{boolExpr} )
\end{tabular}

\vspace{5pt}

In order to define the semantics of an assertion, we first need to define the semantics of a result. \\

\textbf{Result.} The semantics of a result is given in Figure~\ref{figure: result semantics}. It represents the whole relation (if the result is the keyword RESULT), or the semantics of a resultQuery (in all other cases). \\

\begin{figure}[h!]
    \centering
    \begin{minipage}{0.65\textwidth}
    \rule{\linewidth}{0.5pt}
    \vspace{1em}
    $$
\llbracket \text{result} \rrbracket (R)=
\begin{cases}
    R & \text{if } \text{result}=\text{'RESULT'}\\
    \llbracket \text{resultQuery} \rrbracket (R) & \text{otherwise}
\end{cases}
$$

    \vspace{1em}
    \rule{\linewidth}{0.5pt}
    \end{minipage}
    \caption{Semantics of the A-COMPASS result}
    \label{figure: result semantics}
\end{figure}

Further, we need the semantics of resultQuery. Since resultQuery is defined by either a filterQuery or a processQuery, we will first give the semantics of a filterQuery.\\

\textbf{filterQuery.} The semantics of a filter query is defined in Figure~\ref{figure: filter semantics}. It corresponds to the FROM... WHERE part of the SELECT... FROM... WHERE query from SQL, and it represents the sub-bag of $R$ composed of records that meet the defined condition. The multiplicity of a record in the sub-bag is the same as the multiplicity of the same record in $R$. \\

\begin{figure}[h!]
    \centering
    \begin{minipage}{0.65\textwidth}
    \rule{\linewidth}{0.5pt}
    \vspace{1em}
    $$\llbracket \text{FILTER} \, \varphi \rrbracket = \left\{
\underbrace{r, \ldots, r}_{k \text{ times}}
\;\middle|\;
r \in_k R,\;
\llbracket \varphi \rrbracket_\eta = true
\right\}$$

    \vspace{1em}
    \rule{\linewidth}{0.5pt}
    \end{minipage}
    \caption{Semantics of the A-COMPASS filter query}
    \label{figure: filter semantics}
\end{figure}

For the aggregation in processQuery, we will use the following definition of aggregate functions on bags.

\vspace{10pt}

\begin{definition}[Aggregate Functions] For a bag $R$ on a schema $S$ and an attribute $A\in S$, summation, maximization, minimization, counting, and distinct counting are defined by:

\begin{itemize}

\item $SUM_A(R)=\sum_{r\in R} \#(r,R)\cdot\llbracket A \rrbracket_{\eta_{S,r}};$

\item $MAX_A(R)=\max \{\llbracket A \rrbracket_{\eta_{S,r}}, r\in R\};$

\item $MIN_A(R)=\min \{\llbracket A \rrbracket_{\eta_{S,r}}, r\in R\};$

\item $COUNT_*(R)=|R|;$

\item $COUNT \; DISTINCT_A(R)=|\upvarepsilon(\{\llbracket A\rrbracket_{\eta_{S,r}}, r \in R \})|.$

\end{itemize}
   
\end{definition}

\textbf{processQuery. } Let PROCESS $f_B$ AS $C$ [WHERE $\psi$][GROUP BY $G$] be a process query where $f$ is an aggregate function applied to the attribute $B$ (or *), $C$ is the name of the introduced attribute, $\psi$ is a Boolean expression used for filtering, $G=(G_1, \cdots, G_m)$ where $G_i \in  S$ is a set of attributes from a schema $S$ used for grouping. We will introduce the semantics of this query in three steps.

\begin{enumerate}
    \item FILTERING. If WHERE $\psi$ is present, the query first filters some records and produces a sub-bag of $R$. The filtered bag $R'$ is given by
    \[R'=\left\{
\underbrace{r, \ldots, r}_{k \text{ times}}
\;\middle|\;
r \in_k R,\;
\llbracket \psi \rrbracket_{\eta_{S,r}} = true
\right\}.\]

If there is no filtering, $R'=R$.

\item GROUPING. If GROUP BY $G$ is present, the query then partitions the records of the filtered bag into groups. Each group will have it's own group key. Let $K$ denote a set of distinct group keys that occur in $R'$. $K$ is then given by:
\[K=\{\ \llbracket  G \rrbracket_{\eta_{S,r}}, r \in R'\}, \text{where}\]
\[\llbracket  G \rrbracket_{\eta_{S,r}}=(\llbracket  G_1 \rrbracket_{\eta_{S,r}},\cdots, \llbracket  G_m \rrbracket_{\eta_{S,r}} ).\]

For a group with a group key $g \in K$, the sub-bag of records that constitutes this group is given by:

\[G_g=\left\{
\underbrace{r, \ldots, r}_{k \text{ times}}
\;\middle|\;
r \in_k R',\;
\llbracket G \rrbracket_{\eta_{S,r}} = g
\right\}.\]

\item AGGREGATION. In the third step, the aggregate values per group $f_B(G_g)$ are computed. If the grouping step is skipped, only one aggregate value $f(R')$ is computed for the entire filtered bag.  The function $f$ can be one of the aggregate functions we have defined above (applied to the attribute $B$ and the corresponding sub-bag).

\end{enumerate}

Finally, the semantics of a processQuery is defined in Figure~\ref{figure: process semantics}. \\

\begin{figure}[h!]
    \centering
    \begin{minipage}{0.65\textwidth}
    \rule{\linewidth}{0.5pt}
    \vspace{1em}
   $$\llbracket \text{PROCESS} \, f_B \, \text{AS} \: C \: \text{WHERE} \: \psi \: \text{GROUP BY} \, G\rrbracket = \{(g, f_B(G_g)), g\in K\}$$

$$\llbracket \text{PROCESS} \, f_B \, \text{AS} \: C \: \text{WHERE} \: \psi \rrbracket = \{(f(R'))\}$$

    \vspace{1em}
    \rule{\linewidth}{0.5pt}
    \end{minipage}
    \caption{Semantics of the A-COMPASS process query}
    \label{figure: process semantics}
\end{figure}

When grouping is present, the multiplicity of each tuple $(g, f_B(G_g))$ is one, since the group key is unique per group. If there is no grouping, there is only one tuple $(f(R'))$. \\

Finally, we define the semantics of an assertion. Let $P=\llbracket result \rrbracket (R)$ with schema $S'=l(result)$ (the result schema is precisely defined in Figure~\ref{figure: A-COMPASS schema}). The semantics of an assertion is then defined in Figure~\ref{figure: assertion semantics}. \\

\begin{figure}[h!]
    \centering
    \begin{minipage}{0.65\textwidth}
    \rule{\linewidth}{0.5pt}
    \vspace{1em}
   $$\llbracket \text{EACH} \: \text{result} \: \varphi \rrbracket (R)=true \iff \forall r \in P: \llbracket \varphi\rrbracket_{\eta_{S',r}}= true$$

$$\llbracket \text{SOME} \: \text{result} \: \varphi \rrbracket (R)=true \iff \exists r \in P: \llbracket \varphi\rrbracket_{\eta_{S',r}}=true$$

    \vspace{1em}
    \rule{\linewidth}{0.5pt}
    \end{minipage}
    \caption{Semantics of the A-COMPASS assertion}
    \label{figure: assertion semantics}
\end{figure}

\subsubsection{Affected Bag.}

For each assertion there will be a bag of records which violates the assertion and needs to be altered by the following action. This bag will be called an \textit{affected bag} and it is a sub-bag of the original relation $R$. If there are no records that violate the assertion, the affected bag will be an empty bag.

For an assertion of the type 'SOME' \gG{result} ':' \gG{boolExpr}, the affected bag will be the whole relation $R$. When an assertion with 'SOME' is violated, it means that there are no records in the whole relation that satisfy the required condition. 

For an assertion of the type 'EACH' \gG{result} ':' \gG{boolExpr}, we need to differentiate between several sub-types of assertions in order to define the affected bag.\\

\begin{definition}[Affected Bag for EACH RESULT : \gG{boolExpr}] \label{definition: aff bag for EACH RESULT} For a relation $R$ and an assertion EACH RESULT : $\varphi$, where $\varphi$ is a Boolean expression, the affected bag is defined by:

\[\mathrm{AFF}(R, \text{EACH RESULT :} \, \varphi) = \left\{
\underbrace{r, \ldots, r}_{k \text{ times}}
\;\middle|\;
r \in_k R,\;
\llbracket \varphi \rrbracket_{\eta_{S,r}} = false
\right\}.\]
    
\end{definition}

\begin{definition}[Affected Bag for EACH FILTER \gG{boolExpr} : \gG{boolExpr}] \label{definition: aff bag for EACH FILTER} For a relation $R$ and an assertion EACH FILTER $\psi$ : $\varphi$, where $\psi$ and $\varphi$ are Boolean expressions, the affected bag is defined by:

\[\mathrm{AFF}(R, \text{EACH FILTER } \psi \, \text{:} \, \varphi) = \left\{
\underbrace{r, \ldots, r}_{k \text{ times}}
\;\middle|\;
r \in_k R,\; \llbracket \psi \rrbracket_{\eta_{S,r}} = true \land
\llbracket \varphi \rrbracket_{\eta_{S,r}} = false
\right\}.
\]
    
\end{definition}

\begin{definition}[Affected Bag for EACH PROCESS with GROUP BY ] \label{definition: aff bag for EACH PROCESS GROUP BY} For a relation $R$ and an assertion EACH PROCESS $f_B$ AS $C$ WHERE $\psi$ GROUP BY $G$ : $\varphi$, where $f$ is an aggregate function applied to the attribute $B$, $C$ is the name of the introduced attribute, $G=(G_1,\cdots, G_m)$ is a set of attributes from a schema $S$ used for grouping, $\psi$  and $\varphi$ are Boolean expressions:

\[\mathrm{AFF}(R, \text{EACH PROCESS} \; f_B \:  \text{AS} \; C \; \text{WHERE} \; \psi \; \text{GROUP BY}\;  G \, \text{:} \, \varphi) = \left\{
\underbrace{r, \ldots, r}_{k \text{ times}}
\;\middle|\;
r \in_k R',\; \llbracket G \rrbracket_{\eta_{S,r}} \in K_v
\right\},
\]

where $[K_v=\{g \in K \; | \; \llbracket \varphi \rrbracket_{\eta_{S', r_a}}=false\}$ and $S'=(G_1, \cdots, G_m, C) \; , \; r_a=(g, f_B(Gg)).$
    
\end{definition}

Thus, for the affected bag of an assertion EACH PROCESS with FILTER and GROUP BY, we first need to identify a set of violating group keys $K_v$. The assertion condition $\varphi$ is evaluated on a schema of the PROCESS query $S'$ and an aggregated record $r_a$ from the semantics of the PROCESS query. To be more specific, we identify groups whose aggregate values do not meet the condition. Then, we identify records from the filtered bag $R'$ that constitute these groups.\\

\begin{definition}[Affected Bag for EACH PROCESS without GROUP BY] \label{definition: aff bag for EACH PROCESS no GROUP BY} For a relation $R$ and an assertion EACH PROCESS $f_B$ AS $C$ WHERE $\psi$ : $\varphi$, where $f$ is an aggregate function applied to the attribute $B$, $C$ is the name of the introduced attribute, and $\varphi$ is a Boolean expression:

\begin{itemize}
    \item IF $\llbracket \text{PROCESS} \; f_B \:  \text{AS} \; C \; \text{WHERE} \; \psi \rrbracket_R \neq \emptyset$ and $\llbracket \varphi \rrbracket_{\eta_{S', r_a}}=false$,\\ $\mathrm{AFF}(R, \text{EACH PROCESS} \; f_B \:  \text{AS} \; C \; \text{WHERE} \; \psi \, \text{:} \, \varphi)=R'$;
    \item Otherwise, $\mathrm{AFF}(R, \text{EACH PROCESS} \; f_B \:  \text{AS} \; C \; \text{WHERE} \; \psi \, \text{:} \, \varphi)=\emptyset$.
\end{itemize}

Here, $S'=(C)$ is a schema of the PROCESS query without GROUP BY, and $r_a=(f_B(R'))$ is a single aggregated record from the semantics of the PROCESS query.

\end{definition}

\subsubsection{Actions.}

An action is executed if an assertion is violated. Actions transform the semantic state $(R, \rho)$ in the way described below. \\

\textbf{REJECT action.} The action REJECT removes all affected records from the table. If $V$ is the affected bag for the required assertion, the semantics of the REJECT action is then given in Figure~\ref{figure: reject semantics}.\\

\begin{figure}[h!]
    \centering
    \begin{minipage}{0.5\textwidth}
    \rule{\linewidth}{0.5pt}
   
$$\llbracket \text{REJECT} \rrbracket (R,\rho, V) = (R-V, \rho)
$$
    \rule{\linewidth}{0.5pt}
    \end{minipage}
    \caption{Semantics of the action REJECT}
    \label{figure: reject semantics}
\end{figure}

We have already stated that for an assertion with SOME, the affected bag will be the whole relation $V=R$. In that case, the REJECT action will delete the entire table and $\llbracket \text{REJECT} \rrbracket (R,\rho, v) = (\emptyset, \rho)$.\\

\textbf{REPLACE action.} The action REPLACE A WITH \gG{a}  replaces the values of an attribute A with the value \gG{a} in all affected records. Let $V$ be the affected bag for the required assertion, the semantics of the action REPLACE A WITH \gG{a} action is then given in Figure~\ref{figure: replace semantics}.\\

\begin{figure}[h!]
    \centering
    \begin{minipage}{0.6\textwidth}
    \rule{\linewidth}{0.5pt}
   
$$\llbracket \text{REPLACE  A  WITH  a }\: \rrbracket (R,\rho, V) = ((R-V)\cup V_a, \rho)
$$

$$\text{where}$$

$$V_a = \left\{
\underbrace{r[A \xrightarrow{} a ], \ldots, r[A \xrightarrow{} a ]}_{k \text{ times}}
\;\middle|\;
r \in_k V
\right\}
$$
    \rule{\linewidth}{0.5pt}
    \end{minipage}
    \caption{Semantics of the action REPLACE}
    \label{figure: replace semantics}
\end{figure}

\textbf{RANDOM action.} The action RANDOM A l u replaces the values of an attribute A with the random integer value between the lower bound (l) and the upper bound (u) in all affected records. Let $V$, $|V|=m$ be the affected bag and $\text{order}(V)=(r_1, \cdots, r_m)$ be it's canonical ordering. Since $|V|=m$ the first $m$ values of the random trace $\rho$ will be consumed (one per record $r_i$ from the affected bag), and that is why we will decompose $\rho$ in $\text{pref}_m(\rho) = (\rho_1, \cdots, \rho_m)$ and $\text{tail}_m(\rho) = (\rho_{m+1}, \rho_{m+2},\cdots)$. The semantics of the action RANDOM is then given in Figure~\ref{figure: random semantics}.\\

\begin{figure}[h!]
    \centering
    \begin{minipage}{0.6\textwidth}
    \rule{\linewidth}{0.5pt}
   $$\llbracket \text{RANDOM  A  l u }\: \rrbracket (R,\rho, V) = ((R-V)\cup V_r, \text{tail}_m(\rho))
$$ 

$$\text{where}$$ 

$$V_r=
\left\{
r_i\!\left[A\mapsto \text{random}(l,u,r_i)\right]
\ \middle|\ 1\le i\le m
\right\}
$$

    \rule{\linewidth}{0.5pt}
    \end{minipage}
    \caption{Semantics of the action RANDOM}
    \label{figure: random semantics}
\end{figure}

\subsubsection{Requirements.} 

\vspace{5pt}

Finally, A-COMPASS requirement $r$ is in the form assertion : action. Let $V$ be the affected bag for the required assertion, the semantics for a requirement is then given in Figure~\ref{figure: requirement semantics}. \\

\begin{figure}[h!]
    \centering
    \begin{minipage}{0.6\textwidth}
    \rule{\linewidth}{0.5pt}
  $$ \llbracket r \rrbracket (R, \rho) = \begin{cases}
(R, \rho) & \text{if} \; \llbracket  \text{assertion} \rrbracket = \text{true}\\
\llbracket \text{action} \rrbracket (R, \rho, V) & \text{otherwise}
\end{cases} $$

    \rule{\linewidth}{0.5pt}
    \end{minipage}
    \caption{Semantics of the A-COMPASS requirement}
    \label{figure: requirement semantics}
\end{figure}

\section{Properties of the Semantics}\label{properties}
In this section, we prove the fundamental properties of the introduced semantics. We then validate the anononymity properties of A-COMPASS language using this semantics.
\subsection{Fundamental Properties}

Here we prove the determinism and the compositionality of the given semantics.\\

Since we modeled RANDOM deterministically, it is essential to show that the output of the semantic function is uniquely determined for every requirement and every  initial semantic state.

\begin{thm}[Determinism]\label{theorem: determinism} For every requirement $r$ and every semantic state $(R, \rho) \in \Sigma$, where $\Sigma$ is a set of semantic states, there exists a unique state $(R', \rho') \in \Sigma$ such that

$$\llbracket r \rrbracket (R, \rho) = (R', \rho').$$
    
\end{thm}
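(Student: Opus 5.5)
The plan is to argue by structural induction on the syntax of a requirement, following the grammar of Figure~\ref{figure: A-COMPASS syntax}. At every syntactic level we show that the corresponding semantic clause defines a total function, i.e.\ a single-valued object that always exists. Uniqueness and existence of $(R',\rho')$ then follow from the clause in Figure~\ref{figure: requirement semantics}, since it is a case split on a well-defined Boolean followed by the application of a function.

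\textbf{Terms and predicates.} For a fixed schema $S$ and record $r$, the environment $\eta_{S,r}$ is a function, because attribute names in a schema are distinct. Hence $\llbracket t\rrbracket_\eta$ is a unique value. For predicates we induct on boolExpr: atomic conditions $A\,\theta\,c$ are two-valued by the assumed semantics of base predicates (no \textsf{NULL}s occur, since $D$ excludes them). Negation, conjunction and disjunction preserve two-valuedness.

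\textbf{Results and assertions.} FILTER produces a bag determined by the multiplicities $\#(r,R)$ and a two-valued test, so it is well defined. For PROCESS, the filtered bag $R'$, the key set $K$ and the groups $G_g$ are uniquely determined. Each aggregate is a function of a finite bag: SUM, COUNT and COUNT DISTINCT are obviously so. MIN and MAX need a nonempty argument, and each $G_g$ with $g\in K$ is nonempty by construction; the ungrouped case is restricted to nonempty $R'$, as stipulated in the SQL comparison. Quantifying a two-valued predicate over the finite bag $P$ then gives a unique truth value for EACH and SOME. Likewise, the affected bag $V$ is uniquely determined in each case of Definitions~\ref{definition: aff bag for EACH RESULT}--\ref{definition: aff bag for EACH PROCESS no GROUP BY}, and $V=R$ for SOME.

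\textbf{Actions.} REJECT and REPLACE are compositions of the deterministic bag operations $-$ and $\cup$ with record update (Definition~\ref{definition: record update}), and they leave $\rho$ unchanged. The main obstacle is RANDOM, whose output would depend on an enumeration of the unordered bag $V$. We resolve this with Definition~\ref{definition: canonical ordering}: $\text{order}(V)$ is unique, because the order on $D$ is total (numeric, then lexicographic). Equal records are interchangeable, so the list is determined up to identity of its entries. The $i$-th record is paired with $\rho_i$, and the value $\text{random}(l,u,\rho_i)=l+\lfloor\rho_i(u-l+1)\rfloor$ is a unique integer. Since $m=|V|$ is finite, $\text{tail}_m(\rho)$ is again a well-defined infinite sequence in $[0,1)^{\mathbb N}$, so the new state lies in $\Sigma$.

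Combining these steps, both branches of the requirement clause return exactly one state in $\Sigma$, which proves existence and uniqueness of $(R',\rho')$.
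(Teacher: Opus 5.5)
Your proposal is correct and follows the same skeleton as the paper's proof: split on the truth value of the assertion, note that the affected bag $V$ is uniquely determined, and check REJECT, REPLACE and RANDOM in turn, with RANDOM made deterministic by the canonical ordering of $V$ and the split of $\rho$ into $\text{pref}_m(\rho)$ and $\text{tail}_m(\rho)$. You are more careful than the paper in checking the lower layers it takes for granted: two-valuedness of predicates, nonemptiness for MIN/MAX, uniqueness of $\text{order}(V)$ up to equal records, and that $\text{tail}_m(\rho)$ keeps the result in $\Sigma$. Like the paper, though, you tacitly need every referenced attribute, including the REPLACE/RANDOM target, to lie in the relevant schema, since $\eta_{S,r}$ is only a partial map.
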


\begin{proof}
    The semantics of A-COMPASS requirement is defined by:
$$ \llbracket r \rrbracket (R, \rho) = \begin{cases}
(R, \rho) & \text{if} \; \llbracket  \text{assertion} \rrbracket = \text{true}\\
\llbracket \text{action} \rrbracket (R, \rho, V) & \text{otherwise}
\end{cases}.$$

That is why we have to distinguish between two cases: $\llbracket  \text{assertion} \rrbracket = \text{true}$ and $\llbracket  \text{assertion} \rrbracket = \text{false}$:

\begin{itemize}
    \item $\llbracket  \text{assertion} \rrbracket = \text{true}$, then $\llbracket r\rrbracket (R, \rho)=(R, \rho)$ is uniquely determined;
    \item $\llbracket  \text{assertion} \rrbracket = \text{false}$, then
    $\llbracket r\rrbracket (R, \rho)=\llbracket \text{action} \rrbracket (R, \rho, V). $
    
    By the definition of an affected bag, $V$ is uniquely determined for a fixed assertion. Moreover, if:\\
      - action = REJECT, then $\llbracket \text{REJECT} \rrbracket (R, \rho, V) =(R-V, \rho)$ is uniquely determined;\\
      - action = REPLACE A WITH a, then $\llbracket \text{REPLACE A WITH a} \rrbracket (R, \rho, V) =((R-V) \cup V_a, \rho)$ is uniquely determined, because of the definition of $V_a$;\\
      - action = RANDOM A l u, then $\llbracket \text{RANDOM A l u} \rrbracket (R, \rho, V) =((R-V) \cup V_r, \text{tail}_m(\rho))$ is uniquely determined, because of the canonical ordering of R and V and the decomposition of $\rho = (\text{pref}_m(\rho), \text{tail}_m(\rho))$, which are fixed.

Therefore, $\llbracket r \rrbracket (R, \rho)$ is uniquely determined in all cases.

\end{itemize}
\end{proof}

Since A-COMPASS program is defined by sequential composition, it is essential to show what is the output of that program.

\begin{thm}[Sequential composition]\label{theorem: composition} Let $r_1, r_2, \cdots, r_n $ be a non-empty finite sequence of A-COMPASS requirements. Then $ \forall (R, \rho) \in \Sigma$, where $\Sigma$ is a set of semantic states, it holds:

$$\llbracket r_1 ; r_2 ; \cdots ; r_n\rrbracket (R, \rho) = \left(\llbracket r_n \rrbracket \circ \llbracket r_{n-1} \rrbracket \circ \cdots \circ \llbracket r_1 \rrbracket \right)(R, \rho).$$
    
\end{thm}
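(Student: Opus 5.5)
The natural route is induction on the length $n$ of the sequence, using Theorem~\ref{theorem: determinism} to guarantee that each $\llbracket r_i \rrbracket$ is a total, single-valued function $\Sigma \to \Sigma$. Only under that guarantee is the composition $\llbracket r_n \rrbracket \circ \cdots \circ \llbracket r_1 \rrbracket$ well defined on every state.

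The first step is to fix the meaning of a program. The grammar gives $\gG{requirements} ::= \{\gG{assertion}\ ':'\ \gG{action}\ ';'\}$, so a program is a finite list of requirements executed left to right. I would make this explicit by a recursive clause: $\llbracket r_1 \rrbracket$ is given by Figure~\ref{figure: requirement semantics} for a one-element sequence, and
\[
\llbracket r_1 ; r_2 ; \cdots ; r_n \rrbracket (R,\rho) = \llbracket r_2 ; \cdots ; r_n \rrbracket \bigl(\llbracket r_1 \rrbracket (R,\rho)\bigr)
\]
for $n \geq 2$. In words, the state $(R',\rho')$ produced by $r_1$, with the modified relation and the possibly shortened random trace $\text{tail}_m(\rho)$, is the initial state for the remainder of the program. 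This is exactly how the examples in Section~\ref{A-COMPASS} chain requirements, for instance the three REPLACE requirements that produce the generalization.

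With this definition, the proof proceeds as follows.

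\begin{enumerate}
\item \emph{Base case} $n=1$. The claim reads $\llbracket r_1 \rrbracket (R,\rho) = \llbracket r_1 \rrbracket (R,\rho)$, which is trivial. Theorem~\ref{theorem: determinism} ensures the right-hand side is a unique element of $\Sigma$.
\item \emph{Inductive step.} Assume the claim for all sequences of length $n-1$ and every state in $\Sigma$. By Theorem~\ref{theorem: determinism}, set $(R_1,\rho_1) = \llbracket r_1 \rrbracket (R,\rho) \in \Sigma$. The recursive clause then gives $\llbracket r_1;\cdots;r_n \rrbracket (R,\rho) = \llbracket r_2;\cdots;r_n \rrbracket (R_1,\rho_1)$.
\item Apply the induction hypothesis to the sequence $r_2,\dots,r_n$ at the state $(R_1,\rho_1)$. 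This rewrites the expression as $(\llbracket r_n \rrbracket \circ \cdots \circ \llbracket r_2 \rrbracket)(R_1,\rho_1)$.
\item Substituting back $(R_1,\rho_1) = \llbracket r_1 \rrbracket (R,\rho)$ and using associativity of function composition yields the stated formula.
\end{enumerate}

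The main obstacle is making sure that each intermediate state is a legitimate element of $\Sigma$, so that the next requirement is evaluated on valid input. Concretely:

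\begin{itemize}
\item $R-V$, $(R-V)\cup V_a$ and $(R-V)\cup V_r$ are finite bags.
\item The replaced values lie in $D$ and the records stay on the same schema $l(R)$, so attribute references in later requirements still resolve through $\eta_{S,r}$.
\item $\text{tail}_m(\rho)$ is again an infinite sequence in $[0,1)^{\mathbb{N}}$.
\end{itemize}

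I would check these facts case by case over REJECT, REPLACE and RANDOM, as in the proof of Theorem~\ref{theorem: determinism}. The RANDOM case is the delicate one, because the consumed prefix is what threads the random trace correctly between successive requirements.
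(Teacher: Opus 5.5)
Your proof is correct and is essentially the paper's argument: induction on $n$, unfolding the meaning of sequencing, applying the induction hypothesis, and closing with associativity of function composition. The differences are minor. You peel off the first requirement and apply the hypothesis to $r_2,\dots,r_n$ at the intermediate state $(R_1,\rho_1)$, which is why your hypothesis must quantify over all states, as you correctly do; the paper instead peels off the last requirement via $\llbracket P_k; r_{k+1}\rrbracket = \llbracket r_{k+1}\rrbracket \circ \llbracket P_k \rrbracket$. You also state explicitly the sequencing clause and the closure of $\Sigma$, both of which the paper uses only tacitly.
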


\begin{proof} We prove the statement by the induction on sequence length $n$.

\begin{itemize}
    \item \underline{Base case (n=1)} $\llbracket r_1 \rrbracket (R,\rho) = \llbracket r_1 \rrbracket (R,\rho)$ holds for all $(R,\rho) \in \Sigma.$
    \item \underline{Induction hypothesis (n=k)} Asssume that for some $k>1$, it holds $$ \displaystyle \llbracket r_1 ; r_2 ; \cdots ; r_k\rrbracket (R, \rho) = \left(\llbracket r_k \rrbracket \circ \llbracket r_{k-1} \rrbracket \circ \cdots \circ \llbracket r_1 \rrbracket \right)(R, \rho).$$
    \item \underline{Induction step (n=k+1)} Let $P_{k+1}=r_1; r_2; \cdots ; r_{k+1}$ and $P_{k}=r_1; r_2; \cdots ; r_{k}$, which is of the length $n$. Then $P_{k+1}=P_k; r_{k+1}$ and

    $$ \displaystyle \llbracket P_{k+1}\rrbracket (R,\rho) = \llbracket P_k; r_{k+1}\rrbracket (R,\rho)=\left(\llbracket r_{k+1}\rrbracket \circ \llbracket P_k \rrbracket \right)(R,\rho).$$

    By the induction hypothesis: $ \displaystyle \llbracket P_{k+1}\rrbracket (R,\rho) = \llbracket r_{k+1}\rrbracket \circ \left( \llbracket r_k \rrbracket \circ \cdots \circ \llbracket r_1\rrbracket \right)(R,\rho)$.\\
    By the asociativity of function composition, it follows $$ \displaystyle \llbracket P_{k+1}\rrbracket (R,\rho) = \left(\llbracket r_{k+1}\rrbracket \circ \llbracket r_k \rrbracket \circ \cdots \circ \llbracket r_1\rrbracket \right)(R,\rho).$$
\end{itemize}
    
\end{proof}

\subsection{Anonymity Properties}

Here we validate the following anonymity properties of A-COMPASS language: suppression soundness, generalization soundness, RANDOM soundness with respect to outlier treatment, k-anonymity soundness and completeness, and l-diversity soundness and completeness.\\

\noindent First we give definitions of anonymity properties expressed in A-COMPASS semantics.

\begin{definition}[Indistinguishability] \label{definition: indistinguishability} Two records $r_1, r_2 \in R$ are indistinguishable with respect to $A\in S$, where $S=l(R)$, if

$$ r_1 \sim_{A,R} r_2 \Longleftrightarrow \llbracket A \rrbracket_{\eta_{S,r_1}}=\llbracket A \rrbracket_{\eta_{S,r_2}}.$$
    
\end{definition}

\noindent Indistinguishability among records can also be defined with respect to multiple attributes ($S' \subseteq S$) in a similar way:

$$r_1 \sim_{S',R} r_2 \Longleftrightarrow \llbracket S' \rrbracket_{\eta_{S,r_1}}=\llbracket S'\rrbracket_{\eta_{S,r_2}}.$$

\begin{definition}[Equivalence Class] \label{definition: equivalence class} Let $QI=(Q_1, Q_2, \cdots, Q_m) \subset S$ be a tuple of quasi-identifiers for a relation $R$. The equivalence class of a record $r \in R$ with respect to $QI$ is the sub-bag of $R$ defined by

$$ [r]_{QI,R}=\left\{
\underbrace{\overline{r}, \ldots, \overline{r} }_{k \text{ times}}
\;\middle|\;
\overline{r} \in_k R,\;
\overline{r} \sim_{QI,R} r
\right\}. $$
    
\end{definition}

\begin{definition}[k-anonymity by A-COMPASS semantics] \label{definition: k-anonymity semantical}  Let $QI=(Q_1, Q_2, \cdots, Q_m) \subset S$ be a tuple of quasi-identifiers for a relation $R$. The relation $R$ satisfies the k-anonymity principle if and only if

$$\forall r\in R,  \; |[r]_{QI,R}|\geq k.$$
    
\end{definition}

\begin{definition}[l-diversity by A-COMPASS semantics]\label{definition: l-diversity semantical} Let $QI=(Q_1, Q_2, \cdots, Q_m) \subset S$ be a tuple of quasi-identifiers for a relation $R$. Let $A \subset S$ be the sensitive attribute for the same relation. The relation $R$ satisfies the l-diversity principle if and only if

$$\forall r\in R,  \; |\upvarepsilon (\{\llbracket A \rrbracket_{\eta_{S,r}},r \in [r]_{QI,R}\})|\geq l.$$
    
\end{definition}

\begin{thm}[Suppression soundness] \label{theorem: replace suppression} For any $R$, let $\llbracket \text{REPLACE} \; A \; \text{WITH} \; a \rrbracket (R, \rho, V) = (R',\rho)$ where $R' = (R-V) \cup V_a$, $V$ is the affected bag for a certain assertion, while $V_a$ is the corresponding replacing bag for the action REPLACE $A$ WITH $a$. Then $$\forall r_1, r_2 \in V, r'_1, r'_2 \in [r]_{A=a,R'}$$ where $r'_1$ and $r'_2$ are the transformed records of $r_1$ and $r_2$ respectively, and $[r]_{A=a,R'}$ is the equivalence class in $R'$ that corresponds to $A=a$.
    
\end{thm}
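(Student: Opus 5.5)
The plan is to unfold the semantics of REPLACE from Figure~\ref{figure: replace semantics} and the record-update operation of Definition~\ref{definition: record update}. First I fix $r_1, r_2 \in V$. By the definition of $V_a$, each record $r \in_k V$ contributes exactly $k$ copies of $r[A \to a]$ to $V_a$. So the transformed records are $r'_1 = r_1[A\to a]$ and $r'_2 = r_2[A\to a]$, and both lie in $V_a$. Using the bag-union rule $\#(r,R_1\cup R_2)=\#(r,R_1)+\#(r,R_2)$, we get $\#(r'_i, R') \geq \#(r'_i, V_a) > 0$, hence $r'_1, r'_2 \in R'$. This membership step must come first, because Definition~\ref{definition: equivalence class} only speaks about records of $R'$.

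Next I compute the value of the attribute $A$ on the transformed records. Let $S = l(R) = (A_1,\dots,A_n)$ with $A = A_i$. The schema is unchanged by REPLACE: Figure~\ref{figure: A-COMPASS schema} shows that the action does not alter $l(R)$, and the record update keeps arity and positions. Definition~\ref{definition: record update} then gives $\llbracket A\rrbracket_{\eta_{S,r'_1}} = \eta_{S,r'_1}(A_i) = a$, and in the same way $\llbracket A\rrbracket_{\eta_{S,r'_2}} = a$. By Definition~\ref{definition: indistinguishability} this yields $r'_1 \sim_{A,R'} r'_2$. It also shows that each $r'_j$ is indistinguishable, with respect to $A$, from any record of $R'$ whose $A$-value is $a$.

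Finally I conclude with Definition~\ref{definition: equivalence class}, taking $QI = (A)$. I read $[r]_{A=a,R'}$ as the equivalence class $[r]_{(A),R'}$ for any representative $r \in R'$ with $\llbracket A\rrbracket_{\eta_{S,r}} = a$; the simplest choice is $r = r'_1$, and this class is nonempty by the first step. This class consists, with their multiplicities in $R'$, of exactly those $\overline r \in R'$ with $\overline r \sim_{(A),R'} r$, that is, those with $A$-value $a$. Since $r'_1, r'_2 \in R'$ and both have $A$-value $a$, both belong to $[r]_{A=a,R'}$.

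The main obstacle is not computational. It is pinning down the notation $[r]_{A=a,R'}$, which the paper has not formally defined, and making sure it is independent of the chosen representative. I would therefore state explicitly that $[r]_{A=a,R'}$ means the sub-bag $\{\overline r \in R' \mid \llbracket A\rrbracket_{\eta_{S,\overline r}} = a\}$, with multiplicities as in $R'$. I would then note that $\sim_{(A),R'}$ is an equivalence relation, because it is defined by equality of values, so any two representatives with $A$-value $a$ give the same class. A minor side point is the degenerate case $V=\emptyset$, where the statement holds vacuously.
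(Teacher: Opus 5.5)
Your proposal is correct and follows the paper's argument: unfold the REPLACE semantics to get $r'_1, r'_2 \in V_a$ with $A$-value $a$, conclude $r'_1 \sim_{A,R'} r'_2$, and invoke the definition of equivalence class. Two of your steps are ones the paper leaves implicit, namely the check that $r'_1, r'_2 \in R'$ via the bag-union rule and the explicit fixing of the informal notation $[r]_{A=a,R'}$ independent of representative; they add rigor but do not change the route.
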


\begin{proof} Let us choose two arbitrary records from the affected bag, $r_1, r_2 \in V$. By the semantics of the action REPLACE $A$ WITH $a$,  $r'_1, r'_2 \in V_a$ and

$$\llbracket A \rrbracket _{\eta_{S,r'_1}} = a, \llbracket A \rrbracket _{\eta_{S,r'_2}} = a.$$

Since

$$a= \llbracket A \rrbracket _{\eta_{S,r'_1}} = \llbracket A \rrbracket _{\eta_{S,r'_2}} \Longleftrightarrow r'_1 \sim_{A,R'} r'_2$$

meaning that $r'_1$ and $r'_2$ are indistinguishable with respect to $A$.\\

Finally, by the definition of an equivalence class $r'_1, r'_2 \in [r]_{A=a,R'}.$
    
\end{proof}

\textbf{Interpretation of Theorem~\ref{theorem: replace suppression}.} Action REPLACE $A$ WITH $a$ groups all affected records into one equivalence class with respect to the attribute $A$, and thus enables suppression. As a result, REPLACE is sound with respect to suppression.

\begin{thm}[Replace action enforces indistinguishability] \label{theorem: replace indistinguishability} For any $R$, let $\llbracket \text{REPLACE} \; A \; \text{WITH} \; a \rrbracket (R, \rho, V) = (R',\rho)$, where $R' = (R-V) \cup V_a$, $V$ is the affected bag for a certain assertion, while $V_a$ is the corresponding replacing bag for the action REPLACE $A$ WITH $a$. Let $[r]_{A=a,R}$ and $[r]_{A=a,R'}$ be equivalence classes that corresponds to $A=a$ in $R$ and $R'$ respectively. Then 

$$|[r]_{A=a,R'}|\geq |[r]_{A=a,R}|.$$
    
\end{thm}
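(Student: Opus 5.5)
The plan is to compare multiplicities record by record. By the definition of an equivalence class, $[r]_{A=a,R}$ is the sub-bag of $R$ consisting of the records whose $A$-value is $a$, with their full multiplicities. So its cardinality is
$$|[r]_{A=a,R}|=\sum_{s:\ \llbracket A\rrbracket_{\eta_{S,s}}=a}\#(s,R),$$
and the same formula with $R'$ in place of $R$ gives $|[r]_{A=a,R'}|$. Both relations have the same schema $S$, because REPLACE only updates a value inside a record (Definition~\ref{definition: record update}). The task therefore reduces to showing that
$$\sum_{s:\,s.A=a}\#(s,R')\ \geq\ \sum_{s:\,s.A=a}\#(s,R).$$

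First, recall that the affected bag $V$ is a sub-bag of $R$, so $\#(s,V)\le\#(s,R)$ for every record $s$. By the bag operations defined earlier, this gives
$$\#(s,R')=\#(s,R-V)+\#(s,V_a)=\#(s,R)-\#(s,V)+\#(s,V_a).$$
Summing over the records $s$ with $s.A=a$ yields
$$|[r]_{A=a,R'}|=|[r]_{A=a,R}|-\sum_{s.A=a}\#(s,V)+\sum_{s.A=a}\#(s,V_a).$$
It therefore suffices to show that
$$\sum_{s.A=a}\#(s,V_a)\ \geq\ \sum_{s.A=a}\#(s,V).$$

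Second, the key observation is that every record in $V_a$ has $A$-value $a$, as already noted in the proof of Theorem~\ref{theorem: replace suppression}. Hence the left-hand sum equals $|V_a|$. Moreover, $|V_a|=|V|$, because each $r\in_k V$ contributes exactly $k$ copies of $r[A\to a]$. The right-hand sum counts only part of $V$, namely those records already having $A=a$, so it is at most $|V|$. This gives the inequality, with equality precisely when every affected record already had $A=a$. The intuition is that records leaving the class are exactly matched by their images returning to it, and the records from outside the class only add to it.

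The main obstacle is the bookkeeping of multiplicities under the map $r\mapsto r[A\to a]$, which is not injective. Distinct records of $V$ may collapse to the same updated record, so $\#(s,V_a)$ equals the sum of $\#(r,V)$ over all $r$ with $r[A\to a]=s$. I would handle this by computing $|V_a|$ as a sum over the fibres of this map, which shows $|V_a|=|V|$ without needing injectivity. I would also check that the truncated difference $\max(\cdot,0)$ in $R-V$ causes no trouble, which holds precisely because $V\subseteq R$ by the definitions of the affected bag.
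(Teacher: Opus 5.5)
Your proof is correct and follows the paper's argument: it decomposes $|[r]_{A=a,R'}|$ as $|[r]_{A=a,R}|-|[r]_{A=a,V}|+|[r]_{A=a,V_a}|$, uses $[r]_{A=a,V_a}=V_a$ and $|V_a|=|V|$, and bounds $|[r]_{A=a,V}|\le|V|$ because it is a sub-bag of $V$. Your multiplicity-level bookkeeping (the fibre count giving $|V_a|=|V|$ when $r\mapsto r[A\to a]$ is not injective, and the use of $V\subseteq R$ to avoid truncation in $R-V$) only makes explicit steps that the paper takes for granted.
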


\begin{proof}

Since $R' = (R-V) \cup V_a$, we have

\begin{equation*}   
\begin{split}
|[r]_{A=a,R'}|
&=|[r]_{A=a,R-V}|+|[r]_{A=a,V_a}|\\
&=|[r]_{A=a,R}|-|[r]_{A=a,V}|+|[r]_{A=a,V_a}|.
\end{split}    
\end{equation*}

From $|V_a|=|V|$ and $[r]_{A=a,V_a}=V_a$, we have $|[r]_{A=a,V_a}|=|V|.$\\

Therefore, $|[r]_{A=a,R'}|=|[r]_{A=a,R}| - |[r]_{A=a,V}|+|V|$. Considering that $[r]_{A=a,V}$ is a sub-bag of $V$, $|[r]_{A=a,V}|\leq |V|$ and $|V| - |[r]_{A=a,V}| \geq 0 $, leading to $|[r]_{A=a,R'}| \geq |[r]_{A=a,R}|$.

\end{proof}

\textbf{Interpretation of Theorem~\ref{theorem: replace indistinguishability}.} Suppression with action REPLACE $A$ WITH $a$ enforces indistinguishability among records. After this action, there are more records that are indistinguishable with respect to $A$ (and value $a$) than it was before.

\begin{thm}[Generalization soundness] \label{theorem: generalization} Let $Q \in S$ be a quasi-identifier for a relation $R$, and $S$ be a schema for $R$. Let $g_1, g_2, \cdots, g_n$ be a non-empty sequence of generalization requirements, where $g_i ::= \text{EACH RECORD} \; \varphi_i: \; \text{REPLACE} \; Q \;\text{WITH} \; q_i$ and $q_i \in \{ q_1, q_2, \cdots, q_n \}$ is a generalization value from $D$. Let $R_0, R_1, \cdots, R_n$ be a non-empty finite sequence of relation states defined by: $R_0=R$, $(R_i, \rho_i)=\llbracket g_i \rrbracket(R_{i-1}, \rho_{i-1})$. Let $V_1, V_2, \cdots, V_n$ be a non-empty finite sequence of affected bags for requirements $g_1, g_2, \cdots, g_n$. Assume:
\begin{itemize}
    \item[i)] $V_i \subseteq R_{i-1}$ for all $i \in \{ 1, 2, \cdots, n \}$;
    \item[ii)] $V_i=V_i^0$ for all $i \in \{ 1, 2, \cdots, n \}$, where $V_i^0$ is the affected bag for $R=R_0$;
    \item[iii)] $V_i \cap V_j = \emptyset$ for all $i \neq j$;
    \item[iv)] $ \displaystyle \bigcup_{i=1}^n V_i =R.$
\end{itemize}
   Then, $\forall r \in R_n$, it holds $\llbracket Q \rrbracket_{\eta_{S,r}}\in \{ q_1, q_2, \cdots, q_n \}.$ 
\end{thm}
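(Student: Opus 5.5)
The plan is to track, for each record of the original relation $R$, what happens to it along the chain $R_0,R_1,\dots,R_n$, and to show that every record of $R_n$ is an image $r[Q\to q_i]$ of some original record for some $i$. The key invariant, proved by induction on $i$, is
$$R_i=\Big(R-\bigcup_{j\le i}V_j\Big)\cup\bigcup_{j\le i}(V_j)_{q_j},$$
where $(V_j)_{q_j}$ is the replacing bag of Figure~\ref{figure: replace semantics}, i.e.\ $V_j$ with $Q$ overwritten by $q_j$.

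The base case $i=0$ is trivial. For the step, I would consider two cases according to the requirement semantics of Figure~\ref{figure: requirement semantics}.

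\emph{Assertion false.} Then $R_i=(R_{i-1}-V_i)\cup (V_i)_{q_i}$. Here assumption ii) says $V_i$ is the affected bag computed on $R_0$, and iii) says it is disjoint from $V_1,\dots,V_{i-1}$. Hence the records of $V_i$ still sit in the untouched part $R-\bigcup_{j<i}V_j$ of $R_{i-1}$. Assumption i) guarantees that the bag difference $R_{i-1}-V_i$ removes exactly these copies, with multiplicities, and does not truncate anything. Since the bag operations of Section~\ref{semantics} are multiplicity-wise, removing them from the untouched part and adding the replaced copies yields the invariant for $i$.

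\emph{Assertion true.} Then $R_i=R_{i-1}$, and I would argue that $V_i=\emptyset$ in this case. For an EACH assertion, truth means no record violates $\varphi_i$, so by Definition~\ref{definition: aff bag for EACH RESULT} (or Definition~\ref{definition: aff bag for EACH FILTER}) the affected bag is empty. The invariant then holds trivially. I note that ``EACH RECORD'' in the statement should be read as EACH RESULT or EACH FILTER, and I would say this explicitly.

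With the invariant at $i=n$ in hand, assumption iv) gives $R-\bigcup_{j\le n}V_j=\emptyset$, using that the $V_j$ are pairwise disjoint sub-bags whose union is $R$. Hence $R_n=\bigcup_{j}(V_j)_{q_j}$. Every $r\in R_n$ is therefore of the form $r'[Q\to q_j]$, and by Definition~\ref{definition: record update} together with the term semantics, $\llbracket Q\rrbracket_{\eta_{S,r}}=q_j\in\{q_1,\dots,q_n\}$.

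The main obstacle is the interplay between the true-assertion case and assumption iv). If some $V_i$ is empty because its assertion already holds, iv) forces the remaining $V_j$ to cover $R$, so nothing is lost. The delicate point is instead whether $V_i$ in the hypotheses means the bag computed on $R_{i-1}$ and used by the action, or the one computed on $R_0$. I would resolve this by treating ii) as the identification of the two, and then carefully justify the bag-difference step with multiplicities, using i) and iii).
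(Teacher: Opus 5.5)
Your proof is correct, and it differs from the paper's at the one point that matters.

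The paper inducts on $n$ with the hypothesis that every record of $R_k$ already has $Q$-value in $\{q_1,\dots,q_k\}$. It then splits $r\in R_{k+1}$ according to whether $r\in R_k-V_{k+1}$ (handled by that hypothesis) or $r\in V_{q_{k+1}}$. But assumption iv) is only available for the full sequence. For the prefix $g_1,\dots,g_k$, the records of $V_{k+1},\dots,V_n$ are still untouched in $R_k$. So the length-$k$ statement cannot be invoked, and its conclusion is generally false at intermediate stages.

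Your invariant $R_i=(R-\bigcup_{j\le i}V_j)\cup\bigcup_{j\le i}(V_j)_{q_j}$ keeps this untouched part explicit and uses iv) only once, at $i=n$. That is exactly what makes the induction go through.

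You also handle two points the paper passes over:
\begin{itemize}
\item The case in which an assertion already holds and no action fires. There your reading of ii) forces $V_i=\emptyset$, so the update formula holds uniformly.
\item The multiplicity bookkeeping in $R_{i-1}-V_i$. The relevant fact is $V_i\subseteq R-\bigcup_{j<i}V_j$, which you correctly derive from ii) and iii). Given your invariant, assumption i) is in fact redundant.
\end{itemize}
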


\begin{proof} By iii) and iv) it follows that each record $r \in R$ belongs to exactly one $V_i$, $i \in \{ 1, 2, \cdots, n \}$, so it is affected by only one qeneralization requirement $g_i$, $i \in \{ 1, 2, \cdots, n \}$.\\

We prove the statement by the induction on $n$.

\begin{itemize}
    \item \underline{Base case (n=1)} We show that $\forall r \in R_1$, $\llbracket Q \rrbracket_{\eta_{S,r}} \in \{ q_1 \}.$ By the definition of $R_1$ and i) we have $(R_1, \rho_1)=\llbracket g_1 \rrbracket (R_0, \rho_0)$ and $R_1=(R_0 - V_1) \cup V_{q_1}$, where $V_{q_1}$ is the replacing bag of $V_1$. From iv) we have: $V_1=R=R_0$. Therefore, $r \in V_1$ and $\llbracket Q \rrbracket_{\eta_{S,r}} =q_1 \in \{ q_1 \}.$
    \item \underline{Induction hypothesis (n=k)} Assume that $\forall r \in R_k$, it holds $\llbracket Q \rrbracket_{\eta_{S,r}} \in \{ q_1, q_2, \cdots, q_k \}.$
    \item \underline{Induction step (n=k+1)}. We show that $\forall r \in R_{k+1}$, $\llbracket Q \rrbracket_{\eta_{S,r}} \in \{ q_1, q_2, \cdots, q_{k+1} \}.$ By the definition of $R_{k+1}$ and the Theorem~\ref{theorem: composition} of sequential composition, we have: $(R_{k+1}, \rho_{k+1})=\llbracket g_{k+1} \rrbracket \left(R_k,\rho_k \right).$ From i) we have $V_{k+1} \subseteq R_k$ and $R_{k+1}=(R_k-V_{k+1}) \cup V_{q_{k+1}}$, where $V_{q_{k+1}}$ is the replacing bag for $V_{k+1}$. Let us choose an arbitrary $r \in R_{k+1}$. Therefore, either $r \in (R_k - V_{k+1})$, or $r \in V_{q_{k+1}}$: \\

    - if $r \in (R_k - V_{k+1})$, then by the induction hypothesis $\llbracket Q \rrbracket_{\eta_{S,r}}\in \{ q_1, q_2, \cdots, q_k \} \subset \{ q_1, q_2, \cdots, q_{k+1} \} $;\\
    - if $r \in V_{q_{k+1}}$, then $\llbracket Q \rrbracket_{\eta_{S,r}} = q_{k+1} \in \{ q_1, q_2, \cdots, q_{k+1} \}$.

\end{itemize}

\end{proof}

\textbf{Interpretation of Theorem~\ref{theorem: generalization}.} After $n$ sequential generalization requirements in which every record is affected by only one requirement and all records are affected by some requirement, i.e., affected bags are disjoint and their union is the whole relation, values of the attribute $Q$ will be replaced by a more general value in all records. There will be no record in which the value of $Q$ will differ from one of the predetermined generalized values. Thus, composition of requirements with action REPLACE is sound with respect to generalization.

\begin{thm}[RANDOM soundness w.r.t. outlier treatment] \label{theorem: random}For any $R$, let $\llbracket \text{RANDOM} \; A\; l\; u \rrbracket (R, \rho, V) = (R',\rho')$, where $R' = (R-V) \cup V_r$, $V$ is the affected bag for a certain assertion, whereas $V_r$ is the corresponding replacing bag for the action RANDOM. Then, $\forall r\in V$,

$$\llbracket A \rrbracket _{\eta_{R',r'}} \neq \llbracket A \rrbracket _{\eta_{R,r}}$$

unless $\llbracket A \rrbracket _{\eta_{R,r}} \in [l,u]$ and $\text{random}(l, u, \rho) = \llbracket A \rrbracket _{\eta_{R,r}}$.
    
\end{thm}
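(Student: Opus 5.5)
We argue by unfolding the semantics of RANDOM from Figure~\ref{figure: random semantics} and proving the contrapositive. Fix $r \in V$ and let $r'$ denote its transformed record. First, make the correspondence $r \mapsto r'$ precise. Since $|V|=m$ and $\text{order}(V)=(r_1,\dots,r_m)$ is fixed by Definition~\ref{definition: canonical ordering}, each occurrence of $r$ in $V$ is some $r_i$. Its image is $r'=r_i[A\mapsto \text{random}(l,u,\rho_i)]$, where $\rho_i$ is the $i$-th entry of $\text{pref}_m(\rho)$. Theorem~\ref{theorem: determinism} guarantees this assignment is well defined. We read the statement's $\text{random}(l,u,\rho)$ (and the $\text{random}(l,u,r_i)$ in the figure) as $\text{random}(l,u,\rho_i)$, the value actually consumed for that occurrence. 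We also read $\eta_{R,r}$ as $\eta_{S,r}$ with $S=l(R)=l(R')$, since REPLACE and RANDOM do not change the schema.

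Second, use Definition~\ref{definition: record update} to compute the new value: $\llbracket A \rrbracket_{\eta_{S,r'}} = \text{random}(l,u,\rho_i) = l+\lfloor \rho_i(u-l+1)\rfloor$. Since $\rho_i\in[0,1)$, we get $0\le \lfloor \rho_i(u-l+1)\rfloor \le u-l$ whenever $l\le u$. Hence $\text{random}(l,u,\rho_i)\in[l,u]$; this is a short arithmetic lemma.

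Third, prove the contrapositive. Suppose $\llbracket A \rrbracket_{\eta_{S,r'}} = \llbracket A \rrbracket_{\eta_{S,r}}$. Then $\llbracket A \rrbracket_{\eta_{S,r}} = \text{random}(l,u,\rho_i)$, which is exactly the second clause of the exception. By the lemma, this common value lies in $[l,u]$, which is the first clause. So equality of old and new values can occur only in the stated exceptional situation, which proves the theorem.

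The main obstacle is not mathematical but notational. The statement conflates the whole trace $\rho$ with the single value $\rho_i$ consumed for $r$, and it does not say how $r'$ is determined when $r$ has multiplicity greater than one. In that case different copies of $r$ receive different $\rho_i$, so the claim must be read per occurrence in $\text{order}(V)$. The plan is to fix this reading explicitly at the start. The argument also needs the implicit assumption $l\le u$, which should be stated. If $l>u$, then $u-l+1\le 0$ and $\text{random}$ may fall outside $[l,u]$, but the exception's condition ``$\in[l,u]$'' is then vacuous. In that case the theorem reduces to the equality clause alone, which the same contrapositive argument handles.
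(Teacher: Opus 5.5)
Your argument for $l\le u$ is correct and is essentially the paper's proof. The paper splits on whether $\llbracket A \rrbracket_{\eta_{R,r}}\in[l,u]$, using the fact that $\text{random}(l,u,\rho)\in[l,u]$ for all $\rho\in[0,1)$; you prove the contrapositive from the same fact. Your version is slightly cleaner, since it shows the second exception clause already forces the first. It also ties $r'$ to the per-occurrence entry $\rho_i$ of $\text{pref}_m(\rho)$, where the paper only says ``some $\rho$''.

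The step that fails is your closing treatment of $l>u$. In that case $[l,u]=\emptyset$, so the clause $\llbracket A \rrbracket_{\eta_{R,r}}\in[l,u]$ is unsatisfiable, not vacuously true. The exception therefore never applies, and the statement would assert that every affected value changes. That is false: $\text{random}(l,u,0)=l$ for any bounds, and for $u=l-1$ one gets $\text{random}(l,u,\rho)=l$ for every $\rho$. So an affected record whose old $A$-value is $l$ keeps it, even though $l\notin[l,u]$. The theorem does not ``reduce to the equality clause'' when $l>u$. Instead, $l\le u$ (together with the integer bounds the paper fixes) is a genuine hypothesis. The paper uses it tacitly when it asserts $\text{random}(l,u,\rho)\in[l,u]$. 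State it as a hypothesis and drop the claim that the case $l>u$ is covered.
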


\begin{proof} Let us choose an arbitrary record from the affected bag, $r \in V$. From the semantics of the action RANDOM $A \; l\; u$, $\llbracket A \rrbracket _{\eta_{R',r'}} = \text{random}(l, u, \rho)$ for some $\rho \in [0,1).$ We distinguish the following cases:

\begin{itemize}
    \item $\llbracket A \rrbracket _{\eta_{R,r}} \notin [l,u]$, then by the Definition~\ref{definition: random trace} of random trace, $\text{random}(l, u, \rho)=l+\lfloor \rho \cdot (u-l+1)\rfloor$, where $\rho \in [0,1)$, $\text{random}(l, u, \rho)\in [l,u]$ for all $\rho \in [0,1)$, hence $\llbracket A \rrbracket _{\eta_{R',r'}} \neq \llbracket A \rrbracket _{\eta_{R,r}}$;
    \item $\llbracket A \rrbracket _{\eta_{R,r}} \in [l,u]$, then $\llbracket A \rrbracket _{\eta_{R',r'}} = \llbracket A \rrbracket _{\eta_{R,r}}$ only if $\text{random}(l, u, \rho) = \llbracket A \rrbracket _{\eta_{R,r}}$ for some $\rho \in [0,1)$.
\end{itemize}
\end{proof}

\textbf{Interpretation of the Theorem~\ref{theorem: random}.} The point of the imputation method is to replace the outlier value with another value that reveals less information. Theorem~\ref{theorem: random} says that the replacing value $\llbracket A \rrbracket _{\eta_{R',r'}}$  will always be different from the original value of the outlier $\llbracket A \rrbracket _{\eta_{R,r}}$, unless the original value is already in the interval $[l,u]$ and $random(l,u,\rho)$ is equal to the original value. However, for the proper treatment of outliers, the data curator will never choose the lower and the upper bound for the action RANDOM so that the original value belongs to $[l, u]$, like in Example~\ref{example: outlier}. As a result, RANDOM is sound with respect to outlier treatment. 

\begin{thm}[k-anonymity soundness and completeness] \label{theorem: k-anonymity} For any $R$, let $\llbracket \text{REJECT} \rrbracket (R, \rho, V) = (R',\rho)$, where $R' = R-V $ and $V$ is the affected bag for the assertion representing the k-anonymity condition. Then:

\begin{itemize}
    \item[i)] $R'$ satisfies the k-anonymity principle.
    \item[ii)] For every $S\subseteq R$, if $S$ satisfies the k-anonymity principle then $S\subseteq R'$.
\end{itemize}
    
\end{thm}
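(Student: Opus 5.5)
The plan is to first make explicit the affected bag $V$ for the k-anonymity assertion of Example~\ref{example: k-anonymity}, namely EACH PROCESS COUNT(*) AS $C$ GROUP BY $Q_1,\cdots,Q_m$ : $C \geq k$. There is no WHERE clause, so the filtered bag is $R'=R$ in the sense of the processQuery semantics. (I will write $R^\ast$ for the output $R-V$ of the theorem to avoid the clash of notation.)

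\textbf{Step 1: identify groups with equivalence classes.} For $g=\llbracket QI\rrbracket_{\eta_{S,r}}$, the group $G_g$ of the GROUP BY step is exactly the equivalence class $[r]_{QI,R}$ of Definition~\ref{definition: equivalence class}. This is immediate, since both are defined as the sub-bag of records of $R$, with full multiplicity, whose QI-values equal $g$. Moreover $f_B(G_g)=COUNT_*(G_g)=|[r]_{QI,R}|$.

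\textbf{Step 2: describe $V$ and $R^\ast$.} By Definition~\ref{definition: aff bag for EACH PROCESS GROUP BY}, $K_v=\{g\in K \mid |G_g|<k\}$. Hence $V$ consists of all records, with full multiplicity, lying in an equivalence class of size $<k$. Consequently $V$ is a disjoint bag union of whole classes. Using the definition of bag difference, for every record $r$ we get:
\begin{itemize}
\item $\#(r,R^\ast)=\#(r,R)$ if $|[r]_{QI,R}|\geq k$;
\item $\#(r,R^\ast)=0$ otherwise.
\end{itemize}
If the assertion is true, then $V=\emptyset$ and $R^\ast=R$, which is consistent with this description.

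\textbf{Step 3: soundness (i).} Take $r\in R^\ast$. Then $|[r]_{QI,R}|\geq k$. Every $\bar r\in R$ with $\bar r\sim_{QI}r$ lies in the same class, so it is also kept with its full multiplicity. Therefore $[r]_{QI,R^\ast}=[r]_{QI,R}$, and hence $|[r]_{QI,R^\ast}|\geq k$, which is Definition~\ref{definition: k-anonymity semantical}. This is the step I expect to be the main obstacle. It has to be argued carefully that REJECT removes classes entirely rather than individual records, so that the surviving classes keep their sizes. This relies on Step 2 and on the fact that the indistinguishability relation depends only on attribute values, not on the ambient bag.

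\textbf{Step 4: completeness (ii).} I read $S\subseteq R$ as sub-bag inclusion, i.e.\ $\#(r,S)\leq\#(r,R)$ for all $r$. Suppose $S$ is k-anonymous and take $r\in S$.
\begin{enumerate}
\item Since $S\subseteq R$, we have $[r]_{QI,S}\subseteq[r]_{QI,R}$.
\item Hence $|[r]_{QI,R}|\geq|[r]_{QI,S}|\geq k$.
\item So $r\notin V$, and by Step 2 $\#(r,R^\ast)=\#(r,R)\geq\#(r,S)$.
\end{enumerate}
For records $r\notin S$ the inequality $\#(r,S)=0\leq\#(r,R^\ast)$ is trivial. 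Thus $S\subseteq R^\ast$, showing that $R^\ast$ is the largest k-anonymous sub-bag of $R$.
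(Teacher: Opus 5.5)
Your proof is correct and follows the same structure as the paper's: you describe $V$ as the records lying in QI-classes of size $<k$, get soundness from the fact that surviving classes are unchanged, and get completeness from $|[r]_{QI,S}|\leq|[r]_{QI,R}|$ for a sub-bag $S\subseteq R$. Your version is tighter in three places the paper glosses over. You derive $V$ from the actual assertion EACH PROCESS COUNT(*) AS $C$ GROUP BY $QI$ : $C\geq k$ via the GROUP BY affected-bag definition, whereas the paper cites the EACH RESULT definition. You justify $[r]_{QI,R-V}=[r]_{QI,R}$ by noting that $V$ removes whole classes, which the paper only asserts. You also track multiplicities in the sub-bag inclusion, rather than arguing only via membership $r\notin R'$.
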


\begin{proof} Let $QI=(Q_1, Q_2, \cdots, Q_m) \subset S$ be a tuple of quasi-identifiers for a relation $R$ and $R'$. By the Definition~\ref{definition: aff bag for EACH RESULT} of an affected bag and by the Definition~\ref{definition: k-anonymity semantical} of k-anonymity, we have

$$V=\left\{
\underbrace{r, \ldots, r }_{k \text{ times}}
\;\middle|\;
r\in_k R,\; |[r]_{QI,R}| < k
\right\}.$$

\begin{itemize}
    \item[i)] Let us choose an arbitrary $r \in R'$ and show that $|[r]_{QI,R'}| \geq k$. Since $r\in R'$ and $R'=R-V$, it holds $r\notin V$ and $|[r]_{QI,R}| \geq k$ (1). Also, $[r]_{QI,R}=[r]_{QI,R'}$ (2). From (1) and (2), we have $|[r]_{QI,R'}| \geq k$, which means that $R'$ satisfies the k-anonymity principle.
    \item[ii)] The proof is by contradiction. Let us choose $S$, an arbitrary sub-bag of $R$, which satisfies the k-anonymity principle, and $S \nsubseteq R'$. Then, there exists a record $r$ such that $r\in S$ and $r \notin R'$, meaning that $r\in 
    V$ and $|[r]_{QI,R}|<k$. Since $S\subseteq R$, $|[r]_{QI,S}| \leq |[r]_{QI,R}|<k$ which is the contradiction with the assumption that $S$ satisfies the k-anonymity principle. Therefore, $S\subseteq R'$ and $|S|\leq |R'|.$

\end{itemize}
    
\end{proof}

\textbf{Interpretation of the Theorem~\ref{theorem: k-anonymity}.} The theorem says that the semantics of REJECT action is: i) \textbf{sound} with respect to k-anonymity, i.e., all remaining records satisfy the k-anonymity principle, and ii) \textbf{complete} with respect to k-anonymity, i.e., $R'$ is the largest obtainable sub-bag of $R$ produced by REJECT.

\begin{thm}[l-diversity soundness and completeness] \label{theorem: l-diversity} For any $R$, let $\llbracket \text{REJECT} \rrbracket (R, \rho, V) = (R',\rho)$ where $R' = R-V$ and $V$ is the affected bag for the assertion representing the l-diversity condition. Then:

\begin{itemize}
    \item[i)] $R'$ satisfies the l-diversity principle.
    \item[ii)] For every $S\subseteq R$, if $S$ satisfies the l-diversity principle then $S\subseteq R'$.
\end{itemize}
\end{thm}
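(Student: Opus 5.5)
I will follow the structure of the proof of Theorem~\ref{theorem: k-anonymity}, replacing the cardinality of an equivalence class by the number of distinct sensitive values in it. The first step is to make the affected bag explicit. The assertion representing the $l$-diversity condition is the one of Example~\ref{example: l-diversity}, namely EACH PROCESS COUNT DISTINCT $A$ AS $C$ GROUP BY $QI$ : $C\geq l$, with no WHERE clause, so $R'=R$ in Definition~\ref{definition: aff bag for EACH PROCESS GROUP BY}. For a group key $g\in K$, the group $G_g$ is exactly the equivalence class $[r]_{QI,R}$ of any $r$ with $\llbracket QI\rrbracket_{\eta_{S,r}}=g$, by Definition~\ref{definition: equivalence class}. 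Its aggregated value is $|\upvarepsilon(\{\llbracket A\rrbracket_{\eta_{S,\overline r}},\overline r\in[r]_{QI,R}\})|$. Writing $d_R(r)$ for this quantity, I will establish
$$V=\left\{\underbrace{r,\ldots,r}_{k\text{ times}}\;\middle|\; r\in_k R,\; d_R(r)<l\right\}.$$
It follows that $V$ is a union of whole equivalence classes of $R$.

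For i), I take an arbitrary $r\in R'=R-V$. Since $V$ contains either all copies of a record or none, $r\notin V$, and hence $d_R(r)\geq l$. The key observation is that $[r]_{QI,R'}=[r]_{QI,R}$. This holds because $V$ is a union of entire classes, so removing $V$ either deletes a class completely or leaves it untouched, and the class of $r$ is untouched since $r$ survives. Then $d_{R'}(r)=d_R(r)\geq l$, which verifies Definition~\ref{definition: l-diversity semantical} for $R'$.

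For ii), I argue by contradiction as in the $k$-anonymity case. Suppose $S\subseteq R$ satisfies $l$-diversity but $S\nsubseteq R'$. Then some $r\in S$ has $\#(r,S)>\#(r,R')$. Because multiplicities in $V$ are all or nothing, this forces $r\in V$, so $d_R(r)<l$. Monotonicity gives $S\subseteq R\Rightarrow [r]_{QI,S}\subseteq[r]_{QI,R}$, hence $\upvarepsilon$ of the sensitive values in $S$ is contained in that of $R$, and $d_S(r)\leq d_R(r)<l$. This contradicts $S$ being $l$-diverse at $r\in S$.

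The main obstacle is the identification of $V$ in the first paragraph. The step that must be checked carefully is that the GROUP BY keys coincide with the $\sim_{QI,R}$ classes and that $V$ keeps full multiplicities, since both the equality $[r]_{QI,R'}=[r]_{QI,R}$ in i) and the implication $r\in V$ in ii) depend on this. The monotonicity of distinct counts under sub-bags is routine once it is observed that $\upvarepsilon$ preserves inclusion of supports.
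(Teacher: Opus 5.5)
Your proposal is correct and follows the paper's proof essentially step for step. Like the paper, you identify $V$ as the bag of records whose $QI$-class has fewer than $l$ distinct sensitive values, use $[r]_{QI,R'}=[r]_{QI,R}$ for surviving records in i), and in ii) argue by contradiction using monotonicity of distinct counts under sub-bags. Your version is more careful than the paper's in two places: you derive $V$ from Definition~\ref{definition: aff bag for EACH PROCESS GROUP BY}, where the paper cites the EACH RESULT definition, which does not literally match the COUNT DISTINCT assertion; and you justify the class-preservation and all-or-nothing multiplicity facts that the paper only asserts.
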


\begin{proof} Let $QI=(Q_1, Q_2, \cdots, Q_m) \subset S$ be a tuple of quasi-identifiers for a relation $R$ (and $R'$). Let $A$ be the sensitive attribute for the same relations. By the Definition~\ref{definition: aff bag for EACH RESULT} of an affected bag and by the Definition~\ref{definition: l-diversity semantical} of l-diversity, we have

$$V=\left\{
\underbrace{r, \ldots, r }_{k \text{ times}}
\;\middle|\;
r\in_k R,\; |\upvarepsilon (\{\llbracket A \rrbracket_{\eta_{S,r}}, r \in [r]_{QI,R}\})| < l
\right\}.$$

\begin{itemize}
    \item[i)] Let us choose an arbitrary $r \in R'$ and show that $|\upvarepsilon (\{\llbracket A \rrbracket_{\eta_{S,r}},r \in [r]_{QI,R'}\})| \geq l$. Since $r\in R'$ and $R'=R-V$, it holds $r\notin V$ and $|\upvarepsilon (\{\llbracket A \rrbracket_{\eta_{S,r}}, r \in [r]_{QI,R}\})| \geq l$ (1). Also, $[r]_{QI,R}=[r]_{QI,R'}$ (2). From (1) and (2), we have $ |\upvarepsilon (\{\llbracket A \rrbracket_{\eta_{S,r}},r \in [r]_{QI,R'}\})| \geq l$, which means that $R'$ satisfies the l-diversity principle.
    \item[ii)] The proof is by contradiction. Let us choose $S$, an arbitrary sub-bag of $R$, which satisfies the l-diversity principle, and $S \nsubseteq R'$. Then, there exists a record $r$ such that $r\in S$ and $r \notin R'$, meaning that $r\in V$ and $|\upvarepsilon (\{\llbracket A \rrbracket_{\eta_{S,r}},r \in [r]_{QI,R}\})| <l$. Since $S\subseteq R$, $|\upvarepsilon (\{\llbracket A \rrbracket_{\eta_{S,r}},r \in [r]_{QI,S}\})|  \leq |\upvarepsilon (\{\llbracket A \rrbracket_{\eta_{S,r}},r \in [r]_{QI,R}\})| <l$, which is the contradiction with the assumption that $S$ satisfies the l-diversity principle. Therefore, $S\subseteq R'$ and $|S|\leq |R'|.$

\end{itemize}
    
\end{proof}

\textbf{Interpretation of the Theorem~\ref{theorem: l-diversity}.} The theorem says that the semantics of REJECT action is: i) \textbf{sound} with respect to l-diversity, i.e., all remaining records satisfy the l-diversity principle), and ii) \textbf{complete} with respect to l-diversity, i.e., $R'$ is the largest obtainable sub-bag of $R$ produced by REJECT.

\section{Discussion and Conclusion}\label{conclusion}
We have extended the COMPASS language of \cite{GOKI2023}, which is used for anonymity analysis in databases. The new language is called A-COMPASS. Beside the one record - one group setting in which COMPASS works, A-COMPASS can also work in one record - one person setting, which is more suitable for microdata analysis. Further, A-COMPASS enables users to perform anonymization methods, such as suppression and generalization, not only to check whether a certain anonymity conditions is applied. A-COMPASS extends the COMPASS syntax with an action REPLACE A with a, and a new aggregation operator COUNT DISTINCT, in order to allow reasoning about k-anonymity and l-diversity. We have exemplified A-COMPASS features on a microdata set related to annual electricity consumption. We have also introduced a semantics for A-COMPASS, inspired by SQL semantics of \cite{GuLi2017}. Finally, we have proven the determinism and compositionality of the introduced semantics, as well as soundness and completeness of the introduced actions with respect to anonymity analysis.

Guagliardo and Libkin in \cite{GuLi2017} show the equivalence of their denotational semantics for SQL with relational algebra. Moreover, Palamidessi and Stronati in \cite{CatMar2012} investigate sensitivity bounds for differential privacy in relational algebra. Together, these two lines of research form a direction for future work. First, future work may investigate whether A-COMPASS requirements correspond to expressions in relational algebra. Second, in order to additionally allow privacy analysis, the extension of A-COMPASS may be considered. This approach also includes probabilistic RANDOM modeling so that the language can support working with concrete distributions and reasoning about differential privacy. Finally, the implementation of the language will be considered.

\bibliographystyle{plain}
\bibliography{bibliography}

\end{document}